\newcommand\maxmin[0]{ratio }
\newcommand\linear[0]{linear }
\newcommand\roundProb[0]{\texttt{round-probs}}
\newcommand\roundNat[0]{\texttt{round-naturals}}
\title{Fair and Useful Cohort Selection \thanks{This is a merged and updated version of \cite{smedemarkmargulies2020fair} and \cite{bairaktari2021fair} (now deprecated).}}
\author[1]{Konstantina Bairaktari}
\author[1]{Paul Langton}
\author[1]{Huy L\^{e} Nguy\~{\^{e}}n}
\author[1]{\linebreak Niklas Smedemark-Margulies}
\author[1]{Jonathan Ullman} 
\affil[1]{Khoury College of Computer Sciences, Northeastern University}
\date{\today}
\begin{document}

\maketitle

\begin{abstract}
A challenge in fair algorithm design is that, while there are compelling notions of individual fairness, these notions typically do not satisfy desirable composition properties, and downstream applications based on fair classifiers might not preserve fairness. 
To study fairness under composition, Dwork and Ilvento~\cite{DworkI18} introduced an archetypal problem called \emph{fair-cohort-selection problem}, where a single fair classifier is composed with itself to select a group of candidates of a given size, and proposed a solution to this problem.
In this work we design algorithms for selecting cohorts that not only preserve fairness, but also maximize the utility of the selected cohort under two notions of utility that we introduce and motivate. We give optimal (or approximately optimal) polynomial-time algorithms for this problem in both an offline setting,
and an online setting where candidates arrive one at a time and are classified as they arrive.
\end{abstract}

\section{Introduction}

The rise of algorithmic decision making has created a need for research on the design of fair algorithms, especially for machine-learning tasks like classification and prediction. Beginning with the seminal work of Dwork, Hardt, Pitassi, Reingold, and Zemel~\cite{DworkHPRZ12}, there is now a large body of algorithms that preserve various notions of fairness for individuals. These notions of individual fairness capture the principle that similar people should be treated similarly, according to some task-specific measure of similarity.

While these algorithms satisfy compelling notions of individual fairness in isolation, they will often be used as parts of larger systems. To address this broader context, Dwork and Ilvento~\cite{DworkI18} initiated the study of fairness under \emph{composition}, where one or more fair classifiers will be combined to make multiple decisions, and we want these decisions to preserve the fairness properties of those underlying classifiers. Their work introduced a number of models where fair mechanisms must be composed, demonstrated that na\"ive methods of composing fair algorithms do not preserve fairness, and identified strategies for preserving fairness in these settings. Importantly, these strategies treat the underlying classifiers as a \emph{black-box} so that composition can be modular, eliminating the need to redesign the underlying classifier for every possible use. 

Our work studies the archetypal \emph{fair cohort selection problem} introduced by~\cite{DworkI18}. In this problem, we would like to select $k$ candidates for a job from a universe of $n$ possible candidates. We are given a classifier that assigns a \emph{score} $s_i \in [0,1]$ to each candidate,\footnote{In the model of~\cite{DworkHPRZ12}, fairness requires that the classifier be randomized, so we can think of the output of the classifier as continuous scores rather than binary decisions.} with the guarantee that the scores are individually fair with respect to some similarity metric $\cD$. We would like to select a set of $k$ candidates such that the probability $p_i$ of selecting any candidate should be fair with respect to the same metric.  

\mypar{Preserving Fairness.} Since our goal is to treat the scores $s_1,\dots,s_n$ as a black-box, without knowing the underlying fairness metric, we posit that the scores are fair with respect to some $\cD$, i.e. they satisfy
\begin{equation} \label{eq:fairness0}
\forall i,j,~| s_i - s_j | \leq \cD(i,j),
\end{equation}
but that this metric is \emph{unknown}.
Thus, we will design our cohort selection algorithm so that its output probabilities $p_1,\dots,p_n$ will satisfy
\begin{equation} \label{eq:fairness1}
\forall i,j,~|p_i - p_j| \leq |s_i - s_j| \leq \cD(i,j)
\end{equation}
as this guarantees that we preserve the underlying fairness constraint. 
Without further assumptions about the fairness metric, the only way to preserve fairness is to satisfy \eqref{eq:fairness1}.

\medskip

We can trivially solve the fair cohort selection problem by ignoring the scores and selecting a uniformly random set of $k$ candidates, so that every candidate is chosen with the same probability $k/n$. Thus, in this paper we propose to study algorithms for fair cohort selection that produce a cohort with optimal (or approximately optimal) utility subject to the fairness constraint. Specifically, we consider two variants of the problem:

\mypar{Linear Utilities.} 
Here we assume that 
\begin{enumerate}
    \item the utility for selecting a given cohort is the sum of utilities for each member, and
    \item the scores given by the original classifier accurately reflect the utility of each member of the cohort. 
\end{enumerate} 

Assumption 1 essentially says that members of the cohort are not complements or substitutes and Assumption 2 makes sense if the task that we designed the classifier to solve and the task that the cohort is being used for are closely related. Under these assumptions, it is natural to define the utility for a cohort selection algorithm to be $\sum_{i} p_i s_i$, which is the expectation over the choice of the cohort of the sum of the scores across members in the cohort.
    
\mypar{Ratio Utilities.} Here we continue to rely on Assumption 1, but we now imagine that the expected utility for the cohort is some other linear function $\sum_{i}  p_i u_i$ where the variables $u_i$ are some \emph{unknown} measure of utility in $[0, 1]$. Since the utility function is unknown, our goal is now to produce a cohort that maximizes the utility in the \emph{worst case} over possible utility functions. As we show (Lemma~\ref{lem:maxmin}), maximizing the worst-case utility amounts to assigning probabilities that maximize the quantity $\min_{i} p_i / s_i$. Optimizing the \maxmin utility only implies a lower bound on how optimal our linear utility above is, but does not necessarily optimize it, thus the two utility models are distinct.

\medskip

Our technical contributions are polynomial-time algorithms for fair and useful cohort selection with both linear and \maxmin utilities, in two different algorithmic models.

\mypar{Offline Setting.} In this setting, we are given all the scores $s_1,\dots,s_n$ at once, and must choose the optimal cohort.  We present a polynomial-time algorithm for each model that computes a fair cohort with optimal expected utility.
    
\mypar{Online Setting.} In this setting, we are given the scores $s_1,s_2,\dots$ as a stream. Ideally, after receiving $s_i$, we would like to immediately accept candidate $i$ to the cohort or reject them from the cohort. However, this goal is too strong (see \cite{DworkI18}), so we consider a relaxation where candidate $i$ can be either rejected, or kept on hold until later, and we would like to output a fair cohort at the end of the stream with optimal expected utility while minimizing the number of candidates who are kept on hold.  
We must keep at least $k$ candidates pending in order to fill our cohort.
For \maxmin utilities we give a fair and optimal algorithm in this setting that keeps at most $O(k)$ candidates on hold. For linear utilities, we give an \emph{approximately} fair and optimal algorithm. Specifically, we present a polynomial-time algorithm for this online setting where the fairness constraints are satisfied with an $\eps$ additive error and utility is optimized up to an additive error of $O(k \sqrt{\eps})$, for any desired $\eps > 0$, while ensuring that the number of users on hold never exceeds $O(k + 1/\eps)$. To interpret our additive error guarantee, since the fairness constraint applies to the probabilities $p_1,\dots,p_n$, allowing the fairness constraint to be violated by an additive error of, say, $\eps = 0.01$ means that every candidate is selected with probability that is within $\pm 1\%$ of the probability that candidate would have been selected by some fair mechanism.

\subsection{Techniques}
\label{sec:techniques}

We start with a brief overview of the key steps in our algorithms for each setting. The formal description of the algorithms can be found in Sections \ref{sec:lnr} and \ref{sec:ratio}.

\mypar{Offline Setting.} Our offline algorithm is based on two main properties of the fair cohort selection problem. First, we use a dependent-rounding procedure that takes a set of probabilities $p_1,\dots,p_n$ such that $\sum_{i=1}^{n} p_i = k$ and outputs a random cohort $C$, represented by indicator random variables $\tilde{p}_1,\dots,\tilde{p}_n$ with $\sum_{i=1}^{n} \tilde{p}_{i} = k$ such that $\ex{}{\tilde{p}_u} = p_u$ for every candidate $u$.  Thus, it is enough for our algorithm to come up with a set of marginal probabilities for each candidate to appear in the cohort, 
and then run this rounding process, rather than directly finding the cohort.

To find the marginal probabilities that maximize the linear utility with respect to the scores $s_1,\dots,s_n \in [0,1]$, we would like to solve the linear program (LP)
\begin{center}
    \begin{tabular}{rl}
        maximize &  \( \sum_{i=1}^n p_is_i \label{lp} \) \\
        \\
        s.t. &  $\sum_{i=1}^n p_i \leq k$\\
        & $\forall i, j$, $ |p_i-p_j| \leq |s_i-s_j|$ \\
        & $\forall i$, $0 \leq p_i \leq 1$
    \end{tabular}
\end{center}

In this LP, the first and third constraints ensure that the variables $p_u$ represent the marginal probability of selecting candidate $u$ in a cohort of size $k$. The second constraint ensures that these probabilities $p$ are fair with respect to the same measure $\cD$ as the original scores $s$ (i.e. $|p_u - p_v| \leq |s_u - s_v| \leq \cD(u,v)$). Although we could have used the stronger constraint $|p_u - p_v| \leq \cD(u,v)$, writing the LP as we do means that our algorithm does not need to know the underlying metric, and that our solution will preserve any stronger fairness that the scores $s$ happen to satisfy.

While we could use a standard linear program solver, we can get a faster solution that is more useful for extending to the online setting by noting that this LP has a specific closed form solution based on ``water-filling''. Specifically, if $\sum_{i=1}^{n} s_i \leq k$, then the optimal solution simply adds some number $c \geq 0$ to all scores and sets $p_u = \min\{s_u + c, 1\}$, and an analogous solution works when $\sum_{i=1}^{n} s_i > k$.

For the \maxmin utility, although computing the optimal marginal probabilities does not correspond to solving a linear program, the solutions for the two utility functions are the same when $\sum_{i=1}^n s_i\leq k$. 
When $\sum_{i=1}^n s_i > k$, we maximize the ratio of marginal probability over score by scaling all the scores down by the same factor $k/\sum_{i=1}^n s_i$. 

\mypar{Online Setting.} In the online setting we do not have all the scores in advance, thus for the linear utility we cannot solve the linear program, and do not even know the value of the constant $c$ that determines the solution. We give two algorithms for addressing this problem. The basic algorithm begins by introducing some \emph{approximation}, in which we group users into $1/\eps$ groups based on their scores, where group $g$ contains users with scores in $((g-1)\eps, g\eps]$. This grouping can only reduce utility by $O(k \sqrt{\eps})$, and can only lead to violating the fairness constraint by $\eps$. Since users in each bucket are treated identically, we know that when we reach the end of the algorithm, we can express the final cohort as containing a random set of $n_g$ members from each group $g$. Thus, to run the algorithm we use \emph{reservoir sampling} to maintain a random set of at most $k$ candidates from each group, reject all other members, and then run the offline algorithm at the end to determine how many candidates to select from each group.

The drawback of this method is that it keeps as many as $k/\eps$ candidates on hold until the end. Thus, we develop an improved algorithm that solves the linear program in an online fashion, and uses the information it obtains along the way to more carefully choose how many candidates to keep from each group. This final algorithm reduces the number of candidates on hold by as much as a quadratic factor, down to $2k + \frac{1}{\eps}$.

For the \maxmin utility, we are able to fully maximize the utility due to the way we decrease scores when the sum is greater than $k$. When the sum of scores is less than $k$, we must be careful in order to avoid being unfair to candidates eliminated early on.
Thus, we maintain three groups;
the top candidates, candidates undergoing comparisons, and candidates given uniform probability.
The groups are dynamic and candidates can move from the first to the second and then the third group before getting rejected. 
Once the sum of scores exceeds $k$ the groups are no longer needed. 
The algorithm considers at most $O(k)$ candidates at any time.

\subsection{Related Work}

\mypar{Individual Fairness and Composition.}
Our work fits into the line of work initiated by~\cite{DworkHPRZ12} on \emph{individual fairness}, which imposes constraints on how algorithms may distinguish specific pairs of individuals. Within this framework, difficulties associated with composing fair algorithms were first explored by~\cite{DworkI18}, which introduced the fair-cohort-selection problem that we study.  Issues of composition in pipelines were further studied in \cite{DIJ20}.

Because the scores figure into both our fairness constraint and utility measure, our work has some superficial similarity to work on \emph{meritocratic fairness}. At a high-level meritocratic fairness~\cite{JosephKMR16,kearnsrw17,josephkmr2017} is a kind of individual fairness where we assume there exists an intrinsic notion of merit and requires that candidates with higher merit be given no less reward than candidates with lower merit. We note that in our cases the scores $s_1,\dots,s_n$ might or may not be reflective of true merit, since we assume that these scores satisfy individual fairness with respect to some underlying metric $\cD$, which may or may not reflect meritocratic values.

\mypar{Individually Fair Ranking and Cohort Selection.}
Our problem has some similarity to \emph{fair ranking}, where we have to produce a permutation over all $n$ candidates instead of just selecting small cohort of $k \ll n$ candidates. Any algorithm for ranking implies an algorithm for cohort selection, since we can select the top $k$ elements in the ranking as our cohort, and a cohort selected this way may inherit some notion of fairness from the ranking. However, we do not know of any ranking algorithms that can be used in this way to satisfy the notions of fairness and optimality we study in this work. In particular, the most closely related work on individually fair ranking by Bower et al.~\cite{BowerHYS2021} uses an incomparable formulation of the ranking problem, and it is not clear how to express our constraints in their framework. In addition to the difference in how fairness and utility are defined, we are not aware of any fair ranking papers that work in an online model similar to the one we study.

\mypar{Group Fairness.}
A complementary line of work, initiated by \cite{HardtPS16} considers notions of \emph{group fairness}, which imposes constraints on how algorithms may distinguish in aggregate between individuals from different groups. While our work is technically distinct from work on group fairness, cohort/set selection has also  been studied from this point of view~\cite{ZehlikeBCHMB2017, StoyanovichYJ2018} and different approaches for fair ranking have been proposed by \cite{YangS2017,CelisSV2018, SinghJ2018, YangGS2019, ZehlikeC2020}. Issues of composition also arise in this setting, as noted by several works~\cite{Bower17,KRZ2019,AKRZ20}.

\section{Preliminaries}
\subsection{Models}

\label{sec:models}
We consider the problem of selecting a cohort using the output of an individually fair classifier as defined in \cite{DworkI18}.

\begin{defn}[Individual Fairness \cite{DworkHPRZ12}] \label{def:indiv-fair-classify}
Given a universe of individuals $U$ and a metric $\cD$, a randomized binary classifier $C:U\to \{0,1\}$ is individually fair if and only if for all $u, v \in U$,
$$
 |\mathbb{P}(C(u)=1)-\mathbb{P}(C(v)=1)| \le \cD(u,v).
$$
\end{defn}
We restate the formal definition of the problem for convenience.

\begin{defn}[Fair Cohort Selection Problem]
Given a universe of candidates $U$, an integer $k$ and a metric $\mathcal{D}$, select a set $S$ of $k$ individuals such that the selection is individually fair with respect to $\mathcal{D}$, i.e. for any pair of candidates $u,v \in U$, 
$
|\mathbb{P}(u\in S) - \mathbb{P}(v \in S)| \leq \cD(u,v).
$
\label{def:cohort-selection-prob}
\end{defn}

In this work, we are interested in the setting of fairness under composition, and designing algorithms that exploit the fairness properties of their components instead of having direct access to $\cD$.
An algorithm which solves this problem will either preserve or shrink the pairwise distances between the selection probabilities of candidates.
We can think of the probability of $C$ selecting a candidate $u$ as a continuous score $s_u$.

\begin{defn}[Fairness-Preserving Cohort Selection Problem]
\label{def:fairness_preserving_csp}
Given a classifier $\mathcal{C}$ who assigns score $s_i$ to candidate $i$, select a cohort of $k$ individuals where each individual $i$ has marginal selection probability $p_i$ such that for any pair of individuals $i, j$, $| p_i - p_j| \le | s_i - s_j |$.
\end{defn}

In some contexts that we define below, we consider a relaxation of this definition by adding a small error $\varepsilon$ to the pairwise distances. 

\begin{defn}[$\varepsilon$-Approximate Fairness-Preserving Cohort Selection Problem]
Given a classifier $\mathcal{C}$ who assigns score $s_i$ to any individual $i$ and a parameter $\varepsilon \in (0,1]$, select a cohort of $k$ individuals where each individual $i$ has marginal selection probability $p_i$ such that for any pair of individuals $i, j$ $| p_i - p_j| \le | s_i - s_j |+\varepsilon$.
\end{defn}
 
When the initial classifier $C$ is individually fair and the cohort selection algorithm preserves fairness $\varepsilon$-approximately, then the cohort selection probabilities satisfy $\varepsilon$-individual fairness.
 
\begin{defn}[$\varepsilon$-Individual Fairness]
Given a universe of individuals $U$, a metric $\cD$ and an $\varepsilon$ in $[0,1]$, a randomized binary classifier $C:U\to \{0,1\}$ is $\varepsilon$-individually fair if and only if for all $u, v \in U$,
\[
 |\mathbb{P}(C(u)=1)-\mathbb{P}(C(v)=1)|\leq  \cD(u,v)+\varepsilon .
\]
\end{defn}

The problem of fairness-preserving cohort selection may be studied in an \textit{offline} setting, where the universe of candidates is known in advance, as well as in an \textit{online} setting, in which candidates arrive one-at-a-time.
\cite{DworkI18} prove that making immediate selection decisions while preserving fairness is impossible, so we relax this to a setting in which we do not need to give immediate decisions as candidates arrive. 
To control the degree of relaxation, we consider the number of candidates who wait for a response during the selection stage; we seek to minimize the number of these pending candidates.

Individual fairness can be achieved by making uniform random selections among candidates, which is trivially possible in the offline case, and can be implemented in the online setting using reservoir sampling \cite{reservoir_sampling}.
Therefore, it is important to consider an extension to the cohort selection problem in which we also want to maximize the utility of the selected cohort.
We consider two measures of utility and construct algorithms for cohort selection which optimize these measures. Let the selection scores for individuals under classifier $C$ be $s_1, \ldots, s_n$, and let their marginal probabilities of selection under an algorithm $A$ be $p_1, \ldots, p_n$. 

\mypar{Utilities Correlated with Classifier Output.}
We may consider that the classifier scores $s$ directly indicate the qualifications of an individual and that the utility of a cohort is the sum of the scores of each member. 
Then, our evaluation measure becomes the expected utility of the selected cohort.

\begin{defn}[Linear Utility]
Consider a set of $n$ candidates whose scores from classifier $C$ are $s_1, \ldots, s_n$ and whose selection probabilities by algorithm $A$ are $p_1, \ldots, p_n$. 
We define the \linear utility of $A$ to be
$$\textrm{Utility}_{\textsc{\linear}}(A, \{s_i\}) = \sum_{i=1}^{n}{s_i p_i}.$$  
\label{def:linear_utility}
\end{defn}

As seen in Example \ref{exmpl: lnr}, the maximum \linear utility under the fairness-preserving constraint is not a monotone function of the input scores; as the score $s_i$ increases, the maximum \linear utility may either increase, decrease, or stay constant. 
Yet, we show in Lemma \ref{lem:epsilon} that it is robust to small input perturbations. Therefore, it can handle minor noise or errors in the input value. We exploit this property to develop an $\varepsilon$-individually fair algorithm for online cohort selection.

\begin{example}
\label{exmpl: lnr}
Suppose we want to select 2 people from a set of 4 candidates and classifier $C$ assigns scores $ s_1 = 0.1, s_2 = 0.3, s_3 = 0.6$ and $s_4 = 0.9$. The optimal fairness-preserving solution has marginal selection probabilities $p_1 = 0.125, p_2 = 0.325, p_3 = 0.625$ and $p_4 = 0.925$ and utility $\sum_{i=1}^4 p_i s_i = 1.3175$. By increasing the score of the first individual to $s_1' = 0.3$, the cohort selection probabilities become $p_1' = 0.275, p_2' = 0.275, p_3' = 0.575$ and $p_4' = 0.875$. The new utility is $\sum_{i=1}^4 p_i' s_i' = 1.2975$, which is lower than the previous one. 
 \end{example}
\mypar{Arbitrary Utilities.}
Alternatively, we may consider the case where there exists an arbitrary, unknown utility value for each candidate $u_1, \ldots ,u_n$ in $[0,1]$, which may differ from the scores $s$. 
Given $s_i$ and $u_i$, a simple quantity to study is the ratio utility achieved by an algorithm $A$: $\frac{\sum_i p_i u_i}{\sum_i s_i u_i}$. 
Since we do not know the distribution of utility, we must consider the utility achieved by an algorithm $A$ for a worst-case utility vector $\vec{u}$.

\begin{lemma} \label{lem:maxmin}
The worst case ratio utility is equal to the minimum ratio of selection probability and classifier score of one candidate $\min_{\vec{u}} \frac{\sum_i p_i u_i}{\sum_i s_i u_i} = \min_i \frac{p_i}{s_i}$.
\end{lemma}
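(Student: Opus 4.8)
The plan is to treat this as an elementary linear-fractional optimization over the nonnegative orthant and prove two matching inequalities. Throughout I assume, as is implicit in the setup, that the scores satisfy $s_i > 0$ (if some $s_i = 0$ the fairness constraint forces the corresponding $p_i$ to be comparably small, and one either interprets $p_i/s_i$ as a limit or simply restricts attention to candidates with positive score, which is the relevant case), and that $\vec u$ ranges over vectors with every $u_i \ge 0$ and $\sum_i s_i u_i > 0$, so that the ratio is well defined. Observe that the ratio is invariant under positive scaling of $\vec u$, so the particular domain $[0,1]^n$ is immaterial; what is essential is only that the utilities are nonnegative.

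First I would establish the lower bound $\frac{\sum_i p_i u_i}{\sum_i s_i u_i} \ge \min_i \frac{p_i}{s_i}$ for every admissible $\vec u$. The key observation is the identity $\sum_i p_i u_i = \sum_i \frac{p_i}{s_i}\,(s_i u_i)$, which expresses the numerator as a combination of the ratios $p_i/s_i$ with nonnegative weights $s_i u_i \ge 0$. Replacing each ratio by the smallest one gives $\sum_i p_i u_i \ge \big(\min_j \tfrac{p_j}{s_j}\big)\sum_i s_i u_i$, and dividing through by the positive quantity $\sum_i s_i u_i$ yields the claim. Hence $\min_{\vec u}\frac{\sum_i p_i u_i}{\sum_i s_i u_i} \ge \min_i \frac{p_i}{s_i}$.

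Next I would show the bound is tight by exhibiting a worst-case utility vector. Let $i^\star$ be an index minimizing $p_i/s_i$ and take $\vec u = e_{i^\star}$, the indicator vector equal to $1$ in coordinate $i^\star$ and $0$ elsewhere; this is admissible because $s_{i^\star} > 0$. Then $\sum_i p_i u_i = p_{i^\star}$ and $\sum_i s_i u_i = s_{i^\star}$, so the ratio equals exactly $p_{i^\star}/s_{i^\star} = \min_i p_i/s_i$. Combined with the previous paragraph, this proves the equality.

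I do not anticipate a serious obstacle; the only points needing care are (i) confirming the ratio is well defined, i.e.\ that the denominator can be kept strictly positive, which is why at least one positive score is required, and (ii) noticing that the nonnegativity of the $u_i$ is precisely what validates the weighted-average step in the lower bound — if utilities could be negative the identity would still hold but the inequality would fail and the lemma would be false. The achievability direction is immediate once one recognizes that concentrating all utility on the single worst candidate is permitted.
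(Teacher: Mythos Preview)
Your proof is correct and follows essentially the same approach as the paper: both establish the lower bound by observing that each term satisfies $p_i u_i \ge (\min_j p_j/s_j)\, s_i u_i$ and summing, and both establish tightness by plugging in the indicator vector on the minimizing coordinate. Your version is a bit more explicit about the need for $s_i > 0$ and the role of nonnegativity of $\vec u$, but the argument is the same.
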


\begin{proof}
Let $j$ be the index of the candidate with the smallest ratio of model output and classifier score, i.e. 
$
\frac{p_j}{s_j} \le \frac{p_i}{s_i}, \forall i \ne j
$. Compare the minimum over all utility vectors to a particular choice of utility vector $u^*$ satisfying $u^*_j =1$, and $u^*_{i \ne j} = 0$. 
We know that the ratio for any particular vector will be at least as large as the minimum:

\begin{align*}
    \min_{\vec{u}} \frac{\sum_i p_i u_i}{\sum_i s_i u_i} 
     \le \frac{\sum_i p_i u^*_i}{\sum_i s_i u^*_i} 
    = \frac{p_j u^*_j}{s_j u^*_j} = \frac{p_j}{s_j} = \min_{i} \frac{p_i}{s_i}
\end{align*}

We can also obtain a bound from the other direction. 
Let $m$ be the smallest ratio between model output and classifier score $m = \min_i p_i/s_i$ and $u \in [0,1]^n$ be any utility vector with $\sum_i u_i > 0$. 
Then, for all $i \in [n]$ we have $p_i u_i \geq m s_i u_i$. Therefore,
$
    min_{\vec{u}} \frac{\sum_i p_i u_i}{\sum_i s_i u_i}\geq \min_{i}{\frac{p_i}{s_i}} \
$
\end{proof}
We use this equivalence to redefine the ratio utility as follows.

\begin{defn}[Ratio Utility]
Given $n$ candidates with scores $s_i$ and a cohort selection algorithm $A$ with marginal selection probabilities $p_i$, let the \maxmin utility of an algorithm $A$ be defined according to the coordinate with the minimum ratio: 
\begin{align*}
    \textrm{Utility}_{\textsc{\maxmin}}(A, \{s_i\})  = \min_i \frac{p_i}{s_i}.
\end{align*}
\label{def:maxmin_utility}
\end{defn}
Of course other utility measures exist; we study the measures defined above because they capture a natural intuition about real world scenarios.
We present an example to demonstrate the properties of these two utility functions and the performance of pre-existing cohort selection algorithms.

\begin{example}
\label{toy_example}
Consider the case where we would like to select $2$ out of $3$ candidates with scores $s_1 = 0.5$, $s_2 = 0.5$, $s_3 = 1$. 
The optimal solution in terms of \maxmin and \linear utility is to pick $\{1,3\}$ and $\{2,3\}$, each with probability $0.5$. 
This solution achieves \maxmin utility $1$ and \linear utility $1.5$.
The weighted sampling algorithm in \cite{DworkI18} selects each subset with probability proportional to its weight, and would select $\{1,2\}$ with probability $1/4$, $\{1,3\}$ with probability $3/8$, and $\{2,3\}$ with probability $3/8$. 
The \maxmin utility of this solution is $3/4$ and its \linear utility is $11/8$. PermuteThenClassify from \cite{DworkI18} selects $\{1,2\}$ with probability $1/12$ and $\{1,3\}$ or $\{2,3\}$ with probability $11/24$ each. This solution achieves \maxmin utility $11/12$ and \linear utility $35/24$.
Hence, we see that weighted sampling and PermuteThenClassify are sub-optimal for both utility definitions.
\end{example}
\subsection{Dependent Rounding}
\label{sec:dr}
Our cohort selection algorithms for \maxmin and \linear utility in the offline and online settings are based on a simple dependent rounding algorithm described in the Appendix. This algorithm makes one pass over a list of numbers in $[0,m]$, where $m \in \mathbb{R}$, and rounds them so that their sum is preserved and the expected value of each element after rounding is equal to its original value. The algorithm operates on two entries at a time - call these $a$ and $b$. If $a+b\leq m$, the algorithm randomly rounds one of them to $a+b$ and the other one to $0$ with the appropriate probabilities. If $a+b > m$, it rounds one to $m$ and the other to the remainder.

This rounding procedure is a special case of rounding a fractional solution in a matroid polytope (in this case, we have a uniform matroid). This problem has been studied extensively with rounding procedures satisfying additional desirable properties (see e.g. \cite{cVZ10}). Here we use a simple and very efficient rounding algorithm for the special case of the problem arising in our work.

\begin{lemma}
Let $a_1, \ldots, a_n$ be a list of numbers in $[0,m]$ with $x = \sum_{i=1}^n a_i$. There is a randomized algorithm that outputs $\tilde{a}_1, \ldots, \tilde{a}_n \in [0,m]$ such that $\lfloor \frac{x}{m}\rfloor$ of the $a_i$ will be equal to $m$, one $a_i$ will be $x - \lfloor \frac{x}{m}\rfloor m$, the rest will be equal to $0$, and for all $i \in \{1, \ldots, n\}$, $\mathbb{E}[\tilde{a}_i]=a_i$. When $m = 1$, we call this algorithm \roundProb{} and when $a_1, \ldots, a_n$ are natural numbers we call it \roundNat{}.
\label{lem:rounding}
\end{lemma}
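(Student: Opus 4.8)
The plan is to analyze exactly the pairwise merge process described just before the statement, and to establish three things: that it terminates with the claimed combinatorial shape, that the sum $x$ is an exact invariant, and that each coordinate evolves as a martingale so its expectation is preserved.

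First I would pin down the two merge rules with explicit probabilities. Call an entry \emph{fractional} if it lies in the open interval $(0,m)$, and pick any two fractional entries $a,b$. If $a+b\le m$, output $(a+b,0)$ with probability $a/(a+b)$ and $(0,a+b)$ with probability $b/(a+b)$; a one-line computation gives conditional expectation $(a,b)$. If $a+b>m$, output $(m,\,a+b-m)$ with probability $(m-b)/(2m-a-b)$ and $(a+b-m,\,m)$ with probability $(m-a)/(2m-a-b)$, and again a short computation gives conditional expectation $(a,b)$. Here I would check these are genuine probabilities: the numerators $m-a,m-b$ are nonnegative because $a,b\le m$, and the denominator $2m-a-b$ is strictly positive because $a+b>m$ while not both of $a,b$ equal $m$ (the latter also shows $a+b-m\in[0,m)$); monotonicity of the two fractions in $[0,1]$ follows from $a,b\le m$. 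In both cases the step preserves $a+b$ and strictly decreases the number of fractional entries by at least one — in the first rule one entry becomes $0$, in the second one entry becomes $m$, and no other coordinate moves.

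Next, termination and structure. Run the merge step while at least two entries are fractional; by the previous paragraph this occurs at most $n-1$ times, after which at most one entry is fractional and all others equal $0$ or $m$. Because $\sum_i \tilde a_i=x$ is preserved at every step, if $c$ of the entries equal $m$ and $r\in[0,m)$ is the (at most one) leftover value, then $cm+r=x$ with $0\le r<m$, which forces $c=\lfloor x/m\rfloor$ and $r=x-\lfloor x/m\rfloor m$; when $x$ is an exact multiple of $m$ there is no fractional entry and we simply designate one $0$-entry as the coordinate with value $x-\lfloor x/m\rfloor m=0$. This is precisely the asserted output shape. For the naming conventions, $m=1$ yields \roundProb{}, and when the inputs (and $m$) are natural numbers every merge rule sends naturals to naturals, yielding \roundNat{}. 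For the expectation claim, condition on the whole vector at the start of an arbitrary merge step: the two touched coordinates have conditional expectation equal to their current values by the identities above, and every untouched coordinate is unchanged, so the conditional expectation of each coordinate equals its pre-step value; chaining over the at most $n-1$ steps via the tower property shows each coordinate is a martingale, hence $\mathbb{E}[\tilde a_i]=a_i$. The algorithm does $O(1)$ work per merge, so this is an $O(n)$-time procedure.

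I expect the only steps needing genuine care to be the inequality bookkeeping that shows the case $a+b>m$ rule uses valid probabilities (using $a+b>m$ together with $a,b\le m$, and ruling out $a=b=m$), and the clean handling of the degenerate case where $x$ is an exact multiple of $m$; the expectation statement itself drops out immediately from the per-step identities once the martingale framing is set up.
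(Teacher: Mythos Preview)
Your proposal is correct and follows essentially the same approach as the paper's proof: both verify the per-step expectation identities for the two merge rules $a+b\le m$ and $a+b>m$, then use sum-invariance to deduce the claimed output shape. Your write-up is somewhat more careful (explicitly checking that the probabilities in the $a+b>m$ case are valid, framing the expectation argument as a martingale, and noting termination via the count of fractional entries), but the underlying argument is the same.
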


We use the rounding procedure in two ways. We call \roundProb{} when we want to round a list of real numbers in $[0,1]$ which represent selection probabilities and \roundNat{} to round natural numbers which correspond to counts. 
\begin{cor}
If $m = 1$ and $\sum_{i=1}^n a_i=  k \in \mathbb{N}$, then after the dependent rounding of \roundProb{} $k$ elements will be equal to $1$ and the rest will be equal to $0$.
\label{cor:round_prob}
\end{cor}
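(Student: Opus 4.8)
The plan is to derive the corollary as an immediate specialization of Lemma~\ref{lem:rounding}. First I would instantiate that lemma with $m = 1$ and with the list $a_1,\dots,a_n$ being the input to \roundProb{}, so that $x = \sum_{i=1}^n a_i$. By hypothesis $x = k$ is a natural number, so the quantities appearing in the conclusion of Lemma~\ref{lem:rounding} simplify: $\lfloor x/m \rfloor = \lfloor k/1 \rfloor = k$, and the ``remainder'' entry has value $x - \lfloor x/m\rfloor m = k - k\cdot 1 = 0$.

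Next I would read off what this means for the output $\tilde a_1,\dots,\tilde a_n$. Lemma~\ref{lem:rounding} guarantees that exactly $\lfloor x/m\rfloor = k$ of the $\tilde a_i$ equal $m = 1$, that one further entry equals the remainder, and that all other entries equal $0$. Since the remainder is $0$ here, that distinguished entry is indistinguishable from the ``rest,'' so in total exactly $k$ entries equal $1$ and the remaining $n - k$ entries equal $0$. (Implicit here is that $k \le n$, which holds because each $a_i \le m = 1$ forces $k = \sum_i a_i \le n$.)

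There is essentially no obstacle: the only point that requires a sentence of care is observing that the single ``remainder'' coordinate promised by Lemma~\ref{lem:rounding} collapses into the block of zero coordinates precisely because $x$ is an integer, so the count of $1$'s is exactly $k$ rather than $k$ together with a leftover fractional coordinate. I would not need to re-derive the expectation-preservation property, since it is already part of Lemma~\ref{lem:rounding} and is not asserted in the corollary's statement.
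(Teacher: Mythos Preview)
Your proposal is correct and matches the paper's intended argument: the corollary is stated in the paper without proof, as an immediate specialization of Lemma~\ref{lem:rounding}, and your instantiation with $m=1$, $x=k\in\mathbb{N}$ is exactly that specialization. The only elaboration you add---that the single ``remainder'' coordinate collapses to $0$ when $k$ is an integer, so the count of $1$'s is exactly $k$---is the one-line observation the paper leaves implicit.
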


\section{Linear Utility}
\label{sec:lnr}
The first definition of utility we study is that of \linear utility. In this setting, the aim of the cohort selection is to maximize the expected sum of scores of the group of $k$ people selected while preserving fairness. We can formulate this problem as a linear program, as shown in Section \ref{sec:techniques}. 

The key idea of the solution is to move all the selection scores up or down by the same amount, depending on the value of their sum, and clip the probabilities that are less than zero or over one. The algorithm we propose for the online setting solves approximately a relaxation of the cohort selection problem, where the probabilities of two candidates can differ by at most the given metric plus a small constant $\varepsilon$. The new problem is defined as the linear program presented in Section \ref{sec:techniques}, where the second constraint, $\forall i, j \in [n], |p_i-p_j| \leq |s_i-s_j|$, is replaced by $
\forall i, j \in [n],|p_i-p_j| \leq |s_i-s_j|+\varepsilon$. 

\subsection{Offline Cohort Selection}

In the offline setting, the algorithm has full access to the set of candidates and their scores from $C$. By Corollary \ref{cor:round_prob}, if the classifier's scores sum up to $k$, then the rounding algorithm can form a cohort by simply using the input scores as the marginal probabilities of selection. 
Yet, the sum of the scores assigned by classifier $C$ might not be $k$. If it is less than $k$, Algorithm \ref{alg:lnr_offline} moves all the probabilities up by the same amount and clips those that exceed $1$, so that their sum becomes equal to $k$. 
In more detail, the marginal selection probability of candidate $i$ becomes $p_i = \min\{s_i+c,1\}$, where constant $c$ is such that $\sum_{i=1}^n p_i=k$. This adjustment preserves the probability differences between the pairs of candidates, unless some candidate gets assigned probability $1$, in which case some differences decrease.
As a result, the differences of the new probabilities will be bounded by the same metric as the scores. 
The case where the sum of scores is greater than $k$ is treated similarly. More specifically, the new marginal probabilities are of the form $p_i=\min\{0, s_i-c\}$. 
After the adjustment, Algorithm \ref{alg:lnr_offline} selects the cohort by running the dependent rounding procedure. 

\begin{algorithm}[htb]
\DontPrintSemicolon
\SetAlgorithmName{Alg.}{} 

\KwIn{scores $s_1, \ldots, s_n \in [0,1]$ from $C$, cohort size $k$}
\KwOut{set of accepted candidates}
$\textrm{sum} \leftarrow \sum_{i=1}^n s_i$\;
\eIf{$\textrm{\upshape sum} < k$}{
$c \leftarrow \frac{k-\textrm{sum}}{n}$\;
$p_i \leftarrow s_i+c$, $\forall i \in [n]$\;

\While{$\exists p_i > 1$}
{ 
$\mathcal{S}_{<1} \gets \{j: p_j<1\}$\;
$p_j \leftarrow p_j + \frac{p_i-1}{|\mathcal{S}_{<1}|}$, $\forall j \in \mathcal{S}_{<1}$\;
$p_i \leftarrow 1$\;
}
}
{
$c \leftarrow \frac{\textrm{sum}-k}{n}$\;
$p_i \leftarrow s_i-c$, $\forall i \in [n]$\;
\While{$\exists p_i < 0$}
{ 
$\mathcal{S}_{>0}\gets \{j:p_j>0\}$\;
$p_j \leftarrow p_j + \frac{p_i}{|\mathcal{S}_{>0}|}$, $\forall j\in \mathcal{S}_{>0}$\;
$p_i \leftarrow 0$\;
}
}
$\{\tilde{p}_i \} \gets $\roundProb{}($p_1,\ldots, p_n$, 1)\;
\Return candidates with non-zero $\tilde{p}_i$\;
\caption{Offline Cohort Selection for Linear Utility}
\label{alg:lnr_offline}
\end{algorithm} 
\begin{theorem}
For any classifier $C$ that assigns scores $s_1, \ldots, s_n$ to $n$ candidates, Algorithm \ref{alg:lnr_offline} solves the fairness-preserving cohort selection problem by selecting $k$ candidates with marginal probabilities $p_1,  \ldots, p_n$ which achieve optimal \linear utility $\sum_{i=1}^n p_i s_i$.
\label{thm:linearoffline}
\end{theorem}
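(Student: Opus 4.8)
The plan is to prove three things: (i)~Algorithm~\ref{alg:lnr_offline} outputs a legal cohort of exactly $k$ candidates; (ii)~the induced marginals $p_1,\dots,p_n$ satisfy $|p_i-p_j|\le|s_i-s_j|$, hence solve the fairness-preserving cohort selection problem; and (iii)~no feasible choice of marginals attains larger \linear utility. Throughout I will use Lemma~\ref{lem:rounding} and Corollary~\ref{cor:round_prob}: since the algorithm ends with $p_i\in[0,1]$ and $\sum_i p_i=k$ (shown below), \roundProb{} returns a set $S$ of exactly $k$ candidates with $\mathbb{P}(i\in S)=\mathbb{E}[\tilde p_i]=p_i$, so the algorithm's expected utility is exactly $\sum_i p_i s_i$ and it suffices to argue about the vector $p$.

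First I would analyze the two \texttt{while} loops to obtain a closed form for $p$. Take the case $\mathrm{sum}<k$ (the case $\mathrm{sum}>k$ is symmetric with the roles of $1$ and $0$ exchanged, and $\mathrm{sum}=k$ gives $p=s$ directly). Initially $p_i=s_i+c$ with $c=(k-\mathrm{sum})/n>0$, so $\sum_i p_i=k$; each iteration preserves $\sum_i p_i$, freezes one more coordinate at $1$, and never touches frozen coordinates again, so the loop stops after at most $n$ iterations. The key invariant is that any coordinate currently strictly below $1$ has been strictly below $1$ throughout (a coordinate that ever reached $1$ is frozen thereafter), hence was in $\mathcal{S}_{<1}$ at every iteration and received \emph{every} additive bump; consequently all such coordinates equal $s_i+c^\ast$ for a common $c^\ast\ge c\ge0$, while a coordinate frozen at $1$ had value exceeding $1$ when frozen and therefore satisfies $s_i+c^\ast\ge1$. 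So at termination $p_i=\min\{s_i+c^\ast,1\}$ with $c^\ast\ge0$ determined by $\sum_i\min\{s_i+c^\ast,1\}=k$ (possible since $\mathrm{sum}\le k\le n$); symmetrically, for $\mathrm{sum}>k$ we get $p_i=\max\{s_i-c^\ast,0\}$ with $\sum_i p_i=k$. In either regime $p_i\in[0,1]$ and $\sum_i p_i=k$, giving (i), and since $x\mapsto\min\{x+c^\ast,1\}$ and $x\mapsto\max\{x-c^\ast,0\}$ are $1$-Lipschitz and nondecreasing, $|p_i-p_j|\le|s_i-s_j|$, giving (ii). Each loop is $O(n)$ iterations and \roundProb{} is a single pass, so the algorithm is polynomial-time.

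For optimality (iii), let $q$ be any vector feasible for the LP of Section~\ref{sec:techniques}: $\sum_i q_i\le k$, $0\le q_i\le1$, and $|q_i-q_j|\le|s_i-s_j|$; I must show $\sum_i q_i s_i\le\sum_i p_i s_i$. If $\sum_i q_i<k$, replace $q$ by the water-filled vector $q_i'=\min\{q_i+c,1\}$ with $\sum_i q_i'=k$ (possible since $\sum_i q_i<k\le n$); this keeps $q$ feasible by $1$-Lipschitzness and only increases $\sum_i q_i s_i$ because $s_i\ge0$. So assume $\sum_i q_i=k=\sum_i p_i$ and put $r_i=p_i-q_i$, so $\sum_i r_i=0$ and it suffices to show $\sum_i r_i s_i\ge0$. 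The fairness constraint on $q$ forces that $s_i\le s_j$ implies $q_i-s_i\ge q_j-s_j$ (and $s_i=s_j$ implies $q_i=q_j$); the same antitonicity holds for $p_i-s_i$ by the closed form. Combining this with the two regimes of the closed form --- on coordinates where $p_i$ is clipped we have $r_i=1-q_i\ge0$ (case $\mathrm{sum}<k$) or $r_i=-q_i\le0$ (case $\mathrm{sum}>k$) from $0\le q_i\le1$, and on the unclipped coordinates $r_i=\pm c^\ast-(q_i-s_i)$ is nondecreasing in $s_i$ --- one checks that the nonnegative part of $r$ sits at the top of the score order: $\{i:r_i\ge0\}$ (resp.\ $\{i:r_i>0\}$ in the case $\mathrm{sum}>k$) is upward-closed with respect to $s_i$. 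Hence there is a threshold $t'$ with $r_i\le0$ whenever $s_i<t'$ and $r_i\ge0$ whenever $s_i\ge t'$, ties being consistent because equal scores force equal $p_i$ and $q_i$ (the degenerate case in which no such coordinate has $r_i\ge0$, resp.\ $r_i>0$, forces all $r_i=0$). Then, using $\sum_i r_i=0$,
\[
\sum_i r_i s_i=\sum_i r_i(s_i-t')\ge0,
\]
as each summand is a product of two factors of the same sign. This gives (iii).

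The main obstacle is the optimality argument, and within it the single-crossing claim for $r=p-q$: $r$ is a difference of two vectors that are each antitone in the score, and such a difference need not be single-crossing, so one must exploit the specific ``flat-$c^\ast$ then clipped at $0$ or $1$'' shape of $p-s$ together with the box constraints $0\le q_i\le1$ to pin down the sign pattern. A secondary, more mechanical point is verifying the \texttt{while}-loop invariant carefully enough to justify the closed form $p_i=\min\{s_i+c^\ast,1\}$ independently of which violating coordinate the loop selects at each step and independently of whether a coordinate first reaches $1$ via a bump or via being chosen.
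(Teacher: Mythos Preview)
Your argument is correct. Parts (i) and (ii) track the paper closely: both derive the closed form $p_i=\min\{s_i+c^\ast,1\}$ or $p_i=\max\{s_i-c^\ast,0\}$ from the loop invariants and then read off fairness preservation. Your phrasing via the $1$-Lipschitz, nondecreasing maps $x\mapsto\min\{x+c^\ast,1\}$ and $x\mapsto\max\{x-c^\ast,0\}$ is a bit slicker than the paper's three-case check, but it is the same idea.

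Where you genuinely diverge is in the optimality proof (iii). The paper argues by contradiction and exchange: it sorts by score, takes the first coordinate where a hypothetical better solution $p'$ differs from $p$, and in one branch shows a direct fairness violation $|p'_i-p'_j|>|s_i-s_j|$, while in the other branch it argues that one can shift mass in $p'$ toward higher-score candidates to build a strictly better feasible $p''$, contradicting optimality of $p'$. Your route is a direct single-crossing argument: after water-filling $q$ up to mass $k$, you use the antitonicity of $q_i-s_i$ forced by the fairness constraint together with the ``flat $\pm c^\ast$, then clipped'' shape of $p_i-s_i$ and the box constraint $0\le q_i\le1$ on clipped coordinates to show that $r=p-q$ changes sign at most once along the score order, and then conclude $\sum_i r_i s_i=\sum_i r_i(s_i-t')\ge 0$ from $\sum_i r_i=0$. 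This is cleaner and more self-contained than the paper's exchange step, which leaves the construction of the improving $p''$ somewhat implicit; it also makes transparent exactly which structural facts about $p$ and about feasible $q$ are doing the work. The paper's approach, on the other hand, is more elementary in that it never needs to name the threshold or the single-crossing property explicitly; it just exhibits a violated constraint or an improving move. Your acknowledged ``main obstacle''---that a difference of two antitone-in-score vectors need not be single-crossing---is handled correctly: the clipped coordinates are forced to the correct sign by $0\le q_i\le1$, and they sit at the top (case $\mathrm{sum}<k$) or bottom (case $\mathrm{sum}>k$) of the score order, so the upward-closedness of $\{r_i\ge0\}$ (resp.\ $\{r_i>0\}$) follows. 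The loop-invariant justification for the closed form is also fine once one notes that bumps are nonnegative, so a coordinate that was ever $\ge 1$ at some iteration $\tau$ satisfies $s_i+b_\tau\ge1$ with $b_\tau\le c^\ast$, hence $s_i+c^\ast\ge1$.
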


\begin{proof}
Algorithm \ref{alg:lnr_offline} consists of two parts. In the first part, it modifies the input scores so that they sum up to $k$, which is the size of the cohort, and for the second part it randomly selects $k$ individuals by calling the \roundProb{} subroutine. We can simplify the first part by rewriting it in a more concise way. If the sum of $C$'s scores $s_1, \ldots, s_n$ is less than $k$, then for any candidate $i \in [n]$ we set $p_i = \min\{s_i+b, 1\}$, with $b$ in $[0,1]$ such that $\sum_{i=1}^n p_i = k$. More specifically, $p_i$ are initially set to $s_i+c$ so that their sum is equal to $k$. Their sum will remain equal to $k$, because the following loop only redistributes probability mass among candidates. If at any iteration some candidate $i$ has $p_i > 1$, then their marginal selection probability is set to $1$ and the excess probability mass is redistributed evenly to all candidates with marginal probabilities less than $1$. At the end, any probability $p_i$ that is less than $1$ is equal to $s_i$ plus a constant $b$ that consists of the initial $c$ and the fractions of the probabilities that exceeded $1$. Similarly, we have that if the sum of $C$'s scores is greater than $k$, then for any candidate $i$, $p_i = \max\{s_i-b, 0\}$ for a real number $b \in [0,1]$ such that $\sum_{i=1}^n p_i = k$. If the sum is equal to $k$, the cohort selection marginal probabilities are equal to $C$'s scores. Finally, by Lemma \ref{lem:rounding} we obtain that for any $i \in [n]$ the expected value of indicator $\tilde{p}_i$ is $p_i$, and since $\tilde{p}_i$ is either $0$ or $1$, the $i$-th candidate is selected with probability $p_i$.

First, we want to show that the cohort formed by Algorithm \ref{alg:lnr_offline} preserves fairness. Depending on the value of $\sum_{i=1}^n s_i$, we have three cases.
\begin{enumerate}
    \item $\sum_{i=1}^n s_i = k$. The input scores are used unaltered as the marginal probabilities for the cohort selection and, thus, we have that for any pair of indices $i,j$, $
    |p_i-p_j| = |s_i-s_j|$.
    \item $\sum_{i=1}^n s_i < k$. The cohort selection marginal probabilities are of the form $p_i= \min\{s_i+b, 1\}$. For any pair of individuals $i,j$, if both have probability of being selected $1$, then $|p_i-p_j|=0$. Else, if $p_i=1$ and $p_j=s_j+b$, we have $|p_i-p_j|<|s_i-s_j|$ because $s_i+b \geq 1$. Finally, if $p_i=s_i+b$ and $p_j=s_j+b$, then it holds that $|p_i-p_j|=|s_i-s_j|$. As a result, we conclude that for any pair of individuals $i,j$, we have $
    |p_i-p_j| \leq |s_i-s_j|$.
    \item $\sum_{i=1}^n s_i > k$. In this case, the cohort selection probabilities are of the form $p_i= \max\{s_i-b, 0\}$. For any pair of individuals $i,j$, if both have zero probability of being selected then $|p_i-p_j|=0$. Else, if $p_i=0$ and $p_j=s_j-b$, we have $|p_i-p_j|<|s_i-s_j|$. Finally, if $p_i=s_i-b$ and $p_j=s_j-b$, then it holds that $|p_i-p_j|=|s_i-s_j|$. Thus, for any pair of individuals $i,j$, we have $|p_i-p_j| \leq |s_i-s_j|$.
\end{enumerate}

 Second, we show that this cohort optimizes \linear utility. The fairness-preserving solution that optimizes \linear utility solves the linear program defined in Section \ref{sec:techniques} and satisfies the first constraint with equality, i.e. $\sum_{i=1}^n p_i = k$. This holds because otherwise we would be able to increase $p_i$ by setting $p'_i = \min\{p_i+d, 1\}$, for some $d \in (0,1]$, so that $\sum_{i=1}^n p'_i = k$ and obtain greater utility while not violating any constraint. 
 
 We assume that there exists a different fairness-preserving solution $p'_1,\ldots, p'_n$ which achieves higher utility, i.e. $ \sum_{i=1}^n p_i's_i >  \sum_{i=1}^n p_is_i$.
 Without loss of generality we can assume that the original probabilities $s_i$ are sorted in increasing order. Specifically, we have $
s_1 \leq s_2 \leq \ldots \leq s_n$.
The solution of Algorithm \ref{alg:lnr_offline} maintains the same order, i.e.
$p_1 \leq p_2 \leq \ldots \leq p_n$.
Let $i$ be the individual with the smallest index for whom the two solutions differ. 
There are two possible cases. 
If $p_i'< p_i$, then since $\sum_{i=1}^n p_i = \sum_{i=1}^n p_i'= k$ there exists an individual with $j>i$ such that $p_j' > p_j$. In addition, we see that $p_i > 0$ and $p_j<1$. The values of the probabilities depend on the sum of the input scores in comparison to $k$.
If the sum is greater than $k$, all $p_i$ that are not zero are of the form $p_i = s_i-b$. Hence, considering that $p_i'<p_i\leq p_j < p_j'$ we obtain $
|p_i'-p_j'| = p_j'-p_i' > s_j-b-(s_i-b) = s_j-s_i = |s_i-s_j|$.
 Similarly, if the sum is less than $k$, the values of $p_i$ that are not one are of the form $p_i = s_i+b$, therefore giving
$|p_i'-p_j'| = p_j'-p_i' > s_j+b-(s_i+b) = s_j-s_i = |s_i-s_j|$.
If the sum is equal to $k$, we have $
|p_i'-p_j'| = p_j'-p_i' > p_j-p_i = s_j-s_i =  |s_i-s_j|$.
In all three cases the constraints of the optimization are violated. 

If $p_i'> p_i$, then there exists an individual with $j>i$ such that $p_j' < p_j$. Let $j$ be the smallest such index. If there exists another index $l > j$ such that $p_l' \geq p_l$, then by the previous argument the constraints are not satisfied. Therefore, for any candidate $l$ after $j$  $p_l' < p_l$.
This solution achieves non-optimal utility because there exist candidates with $h<l$ and $p_h'>p_h$, such that we can move probability mass from candidates $h$ to individuals $l$ without violating the constraints. By constructing a solution $p_1'', \ldots, p_n''$ which gives a greater utility than $p_1', \ldots, p_n'$, we arrive at a contradiction. Therefore, Algorithm \ref{alg:lnr_offline} finds the optimal fairness-preserving solution.
\end{proof}
\begin{cor}
If $C$ is individually fair, then Algorithm \ref{alg:lnr_offline} is individually fair.
\label{cor:fair}
\end{cor}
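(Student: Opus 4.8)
The plan is to obtain the statement by a one-line transitivity argument, combining the fairness-preservation guarantee of Theorem~\ref{thm:linearoffline} with the hypothesis that $C$ is individually fair. First I would recall that, by Theorem~\ref{thm:linearoffline}, Algorithm~\ref{alg:lnr_offline} outputs a cohort in which candidate $i$ is selected with marginal probability $p_i$, and these probabilities satisfy the fairness-preserving constraint $|p_i - p_j| \le |s_i - s_j|$ for every pair $i,j$ (this is exactly what it means for the algorithm to ``solve the fairness-preserving cohort selection problem'' in the sense of Definition~\ref{def:fairness_preserving_csp}).

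Next I would invoke the hypothesis. Per the model of Section~\ref{sec:models}, the score $s_i$ is identified with $\mathbb{P}(C(i)=1)$, so if $C$ is individually fair with respect to the metric $\cD$ (Definition~\ref{def:indiv-fair-classify}), then $|s_i - s_j| = |\mathbb{P}(C(i)=1) - \mathbb{P}(C(j)=1)| \le \cD(i,j)$ for all $i,j$. Chaining the two inequalities gives $|p_i - p_j| \le |s_i - s_j| \le \cD(i,j)$, i.e. $|\mathbb{P}(i \in S) - \mathbb{P}(j \in S)| \le \cD(i,j)$ for every pair of candidates, which is precisely the condition in Definition~\ref{def:cohort-selection-prob} for the selection to be individually fair with respect to $\cD$.

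There is essentially no obstacle here: the corollary is a direct consequence of transitivity of the $\le$ relation, and the only thing to be careful about is the bookkeeping that identifies the abstract scores $s_i$ with the classifier's selection probabilities so that Definition~\ref{def:indiv-fair-classify} applies. I would also remark that the same argument works verbatim in the $\varepsilon$-approximate regime: if the algorithm only guarantees $|p_i-p_j|\le|s_i-s_j|+\varepsilon$, then composing with individual fairness of $C$ yields $\varepsilon$-individual fairness of the cohort in the sense of the corresponding definition.
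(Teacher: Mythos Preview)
Your proposal is correct and matches the paper's own proof essentially line for line: both chain the fairness-preservation inequality $|p_i-p_j|\le|s_i-s_j|$ from Theorem~\ref{thm:linearoffline} with the individual-fairness bound $|s_i-s_j|\le\cD(i,j)$ for $C$. Your added remark about the $\varepsilon$-approximate case is also exactly how the paper handles the analogous corollary for the online algorithm.
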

\begin{proof}
The individual fairness condition for Algorithm \ref{alg:lnr_offline} is satisfied by the assumption that the input scores are individually fair. More specifically, if the individual scores from $C$ are individually fair with respect to a metric $\mathcal{D}$, we obtain that $\forall i, j \in [n] $, $|p_i-p_j| \leq |s_i-s_j| \leq \mathcal{D}(i,j)$.
\end{proof}

\subsection{Online Cohort Selection}

For the online setting of the problem with \linear utility, we present Algorithm \ref{alg:lnr_str} which reads $C$'s output from a stream and selects a cohort with high utility while deferring the decision for a small number of candidates. While processing the stream, this algorithm only rejects candidates and releases all the acceptances at the end of the stream. The main difference in comparison to the offline cohort selection algorithm is that the candidates are split into groups with similar scores, within an $\varepsilon$ interval, and any member of a group is treated as equivalent. Hence, the final probability of being selected in the cohort is equal for any candidate of the same group. This leads to a relaxation of the problem, which we call $\varepsilon$-approximate fairness-preserving cohort selection problem. 

As we mentioned, the algorithm splits the candidates from the stream into groups according to their score from $C$. Particularly, the interval $(0,1]$ is divided into $m$ intervals of size $\varepsilon$ each, where $m= \lceil \frac{1}{\varepsilon} \rceil$. The $i$-th candidate gets assigned to group $0$ if $s_i = 0$ or group $g \in \{1,\ldots,m\}$ if $s_i\in ((g-1)\varepsilon, g \varepsilon]$. Once candidate $i$ is in the group, their score gets rounded up and their new score $\hat{s}_i$ is the same as that of all the other members of the group, $\hat{s}^g = g \varepsilon$. The use of the groups affects the performance of the algorithm in terms of not only individual fairness, but also utility. Lemma \ref{lem:epsilon} shows that the decrease in utility is caused only by the use of rounding, as Algorithm \ref{alg:lnr_str} achieves the same \linear utility as the corresponding offline algorithm when the scores are rounded to multiples of $\varepsilon$.

\begin{lemma}
Given selection scores $s_1, \ldots, s_n$ from $C$, for a set of $n$ candidates and a parameter $\varepsilon \in (0,1]$, we split $(0,1]$ into $m = \lceil 
\frac{1}{\varepsilon} \rceil$ intervals of length $\varepsilon$ and for all $i \in [n]$ we set $\hat{s}_i = g\varepsilon$, where $g$ is such that $s_i\in ((g-1)\varepsilon, g \varepsilon]$. When Algorithm \ref{alg:lnr_offline} runs for input $\hat{s}_1, \ldots, \hat{s}_n$ and cohort size $k$ it solves the $\varepsilon$-approximate fairness-preserving cohort selection problem with marginal selection probabilities $p_1, \ldots, p_n$, and it achieves \linear utility $\sum_{i=1}^np_is_i\geq \sum_{i=1}^n p_i^*s_i-k(\varepsilon+2\sqrt{\varepsilon})$, where $p_1^*, \ldots, p_n^*$ is the optimal solution for the offline fairness-preserving problem for input $s_1, \ldots, s_n$.
\label{lem:epsilon}
 \end{lemma}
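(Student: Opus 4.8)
The statement has two parts --- $\varepsilon$-approximate fairness-preservation and the utility bound --- and the first is immediate. Since $s_i \in ((g-1)\varepsilon, g\varepsilon]$ whenever $\hat s_i = g\varepsilon$, we have $0 \le \hat s_i - s_i < \varepsilon$, hence $|\hat s_i - \hat s_j| \le |s_i - s_j| + \varepsilon$ for all $i,j$. Applying Theorem~\ref{thm:linearoffline} to the input $\hat s_1,\dots,\hat s_n$ gives output probabilities with $|p_i - p_j| \le |\hat s_i - \hat s_j| \le |s_i - s_j| + \varepsilon$, which is exactly $\varepsilon$-approximate fairness-preservation.

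For the utility bound I would first record the structural description from Theorem~\ref{thm:linearoffline} and the ``water-filling'' discussion of Section~\ref{sec:techniques}: writing $\operatorname{clip}(x) = \max\{0,\min\{1,x\}\}$, the optimal solution for input $s$ is $p_i^* = \operatorname{clip}(s_i + \theta^*)$, and the algorithm's output on input $\hat s$ is $p_i = \operatorname{clip}(\hat s_i + \hat\theta)$, where $\theta^*$ and $\hat\theta$ are the shifts making the respective sums equal $k$; moreover $p$ is optimal for the exact fairness-preserving problem on input $\hat s$ (so it maximizes $q \mapsto \sum_i q_i \hat s_i$ over that problem's feasible polytope) and $p$ is constant on each group. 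Since $\theta \mapsto \sum_i \operatorname{clip}(\hat s_i+\theta)$ is nondecreasing and $s_i \le \hat s_i < s_i + \varepsilon$, evaluating at $\theta=\theta^*$ and at $\theta = \theta^*-\varepsilon$ yields $\theta^* - \varepsilon \le \hat\theta \le \theta^*$. Next I reduce to a cleaner claim: because $0 \le \hat s_i - s_i < \varepsilon$ and $\sum_i p_i = k$, we have $\sum_i p_i s_i \ge \sum_i p_i \hat s_i - k\varepsilon$, so it suffices to prove $\sum_i p_i \hat s_i \ge \sum_i p_i^* s_i - 2k\sqrt\varepsilon$; and since $p$ maximizes $\sum_i q_i \hat s_i$ over the feasible polytope of the $\hat s$-problem, it is enough to exhibit \emph{one} feasible point $q$ there with $\sum_i q_i \hat s_i \ge \sum_i p_i^* s_i - 2k\sqrt\varepsilon$.

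The point I would use is the group-constant vector $q_i := \operatorname{clip}\big((g(i)-1)\varepsilon + \theta^*\big)$, where $g(i)$ is the group of candidate $i$. It is feasible for the $\hat s$-problem: it is constant on groups; for $g<g'$, $|q_i-q_j| = \operatorname{clip}((g'-1)\varepsilon+\theta^*) - \operatorname{clip}((g-1)\varepsilon+\theta^*) \le (g'-g)\varepsilon = |\hat s_i-\hat s_j|$ by $1$-Lipschitzness of $\operatorname{clip}$; each $q_i \in [0,1]$; and $s_i > (g(i)-1)\varepsilon$ gives $q_i \le \operatorname{clip}(s_i+\theta^*) = p_i^*$, hence $\sum_i q_i \le k$. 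Conversely $s_i \le g(i)\varepsilon = (g(i)-1)\varepsilon+\varepsilon$ gives $q_i \ge p_i^* - \varepsilon$, so $0 \le p_i^* - q_i \le \min\{\varepsilon, p_i^*\}$. Combining with $\hat s_i \ge s_i$ and $q_i \ge 0$, $\sum_i q_i\hat s_i - \sum_i p_i^* s_i = \sum_i\big(q_i(\hat s_i - s_i) - (p_i^*-q_i)s_i\big) \ge -\sum_i (p_i^*-q_i)s_i \ge -\sum_i \min\{\varepsilon, p_i^*\}\, s_i$, so the lemma reduces to showing $\sum_i \min\{\varepsilon, p_i^*\}\, s_i \le 2k\sqrt\varepsilon$.

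This last inequality is where I expect the real work, and it is the natural source of the $\sqrt\varepsilon$. If $\sum_i s_i \le k$ it is trivial, since $\sum_i \min\{\varepsilon,p_i^*\}s_i \le \varepsilon \sum_i s_i \le k\varepsilon$. The remaining case $\sum_i s_i > k$ is the main obstacle: there $p_i^* = \max\{0, s_i - b^*\}$ with $b^* = -\theta^* > 0$, and $\sum_i s_i$ is unbounded, so the trivial estimate fails. I would handle it by a threshold argument at $\sqrt\varepsilon$: the candidates with $p_i^* \ge \sqrt\varepsilon$ number at most $k/\sqrt\varepsilon$ (as $\sum_i p_i^* = k$) and each contributes at most $\varepsilon$, for a total of at most $k\sqrt\varepsilon$; for the candidates with $p_i^* < \sqrt\varepsilon$ one uses $s_i = p_i^* + b^*$ to bound $\min\{\varepsilon,p_i^*\}s_i \le \varepsilon p_i^* + b^*\min\{\varepsilon,p_i^*\}$ and then controls $b^*\sum_i \min\{\varepsilon,p_i^*\}$ --- easy when $b^* = O(\sqrt\varepsilon)$, and when $b^*$ is larger one must exploit that the remaining ``small-$p_i^*$'' candidates have scores concentrated near $b^*$ and hence occupy only $O(1)$ groups, on which a sharper group-by-group accounting suffices. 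Getting this sub-case to close cleanly (and keeping the constant $2$) is the delicate point; intuitively it is the assertion that merging near-identical candidates into a common group misallocates at most a $\sqrt\varepsilon$-fraction of the total mass $k$ in a way that hurts utility.
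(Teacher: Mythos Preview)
Your fairness argument and the derivation of $\theta^*-\varepsilon\le\hat\theta\le\theta^*$ are correct and match the paper; note that this already yields the coordinatewise bound $|p_i-p_i^*|\le\varepsilon$, since $p_i-p_i^*=\operatorname{clip}(\hat s_i+\hat\theta)-\operatorname{clip}(s_i+\theta^*)$ and $(\hat s_i-s_i)+(\hat\theta-\theta^*)\in(-\varepsilon,\varepsilon)$. The gap is everything after you introduce the witness $q$. The inequality you reduce to, $\sum_i\min\{\varepsilon,p_i^*\}\,s_i\le 2k\sqrt\varepsilon$, is \emph{false}. Take all $s_i=\tfrac12$ and $n=k/\varepsilon$ (so $\sum_i s_i=k/(2\varepsilon)>k$); then $p_i^*=k/n=\varepsilon$ for every $i$ and $\sum_i\min\{\varepsilon,p_i^*\}\,s_i=n\cdot\varepsilon\cdot\tfrac12=k/2$, which exceeds $2k\sqrt\varepsilon$ whenever $\varepsilon<1/16$. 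Your suggested rescue (``only $O(1)$ groups, sharper group-by-group accounting'') cannot help here: all candidates already sit in a single group, and the inequality you are trying to prove mentions only $p^*$ and $s$---the group structure was discarded the moment you bounded $\sum q_i\hat s_i-\sum p_i^*s_i$ by $-\sum(p_i^*-q_i)s_i$. The underlying problem is that your $q$ can be $\varepsilon$ below $p^*$ at \emph{every} coordinate simultaneously, so $\sum_i(p_i^*-q_i)s_i$ is of order $n\varepsilon$, which is unbounded.

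The paper uses the coordinatewise bound $|p_i-p_i^*|\le\varepsilon$ (which you have) to compare $\sum p_is_i$ with $\sum p_i^*s_i$ directly, never passing through a separate witness. On $E^c=\{i:p_i^*\ge\sqrt\varepsilon\}$ one gets $p_i\ge(1-\sqrt\varepsilon)p_i^*$ and hence $\sum_{E^c}p_is_i\ge(1-\sqrt\varepsilon)\sum_{E^c}p_i^*s_i$. On $E=\{i:p_i^*<\sqrt\varepsilon\}$ (in the hard case $\sum s_i>k$) the key observation is not that the $s_i$ are small but that they are \emph{concentrated}: every $i\in E$ with $p_i^*>0$ has $s_i\in(b^*,b^*+\sqrt\varepsilon)$, an interval of length $\sqrt\varepsilon$. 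Since also $\sum_{i\in E}p_i\le k$ and $\sum_{i\in E}p_i^*\le k$, both $\sum_{i\in E}p_is_i$ and $\sum_{i\in E}p_i^*s_i$ lie in a common interval of length $O(k\sqrt\varepsilon)$, which gives the desired bound on $E$. This is exactly the ``sharper accounting'' you were reaching for, but it has to be run on $p$ versus $p^*$ (whose totals over $E$ agree up to $|E^c|\varepsilon\le k\sqrt\varepsilon$), not on $p^*$ versus your $q$ (whose totals can differ by order $n\varepsilon$).
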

 
 \begin{proof}
 We have that for all individuals $i$ in $[n]$, $0 \leq \hat{s}_i-s_i\leq \varepsilon$. Therefore, we obtain that for any pair of individuals $i,j$, $|\hat{s}_i-\hat{s}_j| \leq |s_i-s_j|+\varepsilon$. By Theorem \ref{thm:linearoffline} we have that $|p_i-p_j|\leq |\hat{s}_i-\hat{s}_j|$. Combining the two inequalities, we obtain that fairness is preserved $\varepsilon$-approximately. 

Note that Algorithm \ref{alg:lnr_offline} always adjusts the scores of candidates such that its final output probabilities sum to $k$. 
For any candidate $i$ we observe that since the rounded score $\hat{s}_i$ is greater than $s_i$ by at most $\varepsilon$, the two output probabilities, $p_i$ and $p^*_i$ are also close, at most $\varepsilon$ apart. Depending on the sum of the input scores there are three cases.
\begin{enumerate}
    \item If $\sum_{i=1}^n s_i \leq k$ and $\sum_{i=1}^n \hat{s}_i\leq k$, then Algorithm \ref{alg:lnr_offline} shifts all the probabilities up. More specifically, we have that $p^*_i = \min\{1, s_i +x\}$ and $p_i = \min\{1,\hat{s}_i+y\}$, for some non-negative $x$ and $y$. Suppose that $y > x$, then for any candidate $i$ whose marginal selection probability $p_i$ has not been clipped to $1$, $p_i = \hat{s}_i+y > s_i + x = p^*_i$. By summing up for all candidates we obtain that $k = \sum_{i=1}^n p^*_i < \sum_{i=1}^n p_i = k$, which is not feasible. Similarly, suppose that $x>y+\varepsilon$, then for any candidate $i$ whose marginal selection probability $p^*_i$ is has not been clipped, $ p^*_i = s_i+x > \hat{s}_i+y = p_i $. As a result, we get that $\sum_{i=1}^n p^*_i > \sum_{i=1}^n p_i$, which is again false. Hence, we see that $y \leq x \leq y+\varepsilon$.
    \item If $\sum_{i=1}^n s_i \geq k$ and $\sum_{i=1}^n \hat{s}_i\geq k$, then Algorithm \ref{alg:lnr_offline} shifts all the probabilities down, so that both sums become equal to $k$. Thus, $p^*_i = \max\{0, s_i -x\}$ and $p_i = \max\{0,\hat{s}_i-y\}$, for some non-negative $x$ and $y$. Suppose that $y < x$, then for any candidate $i$ who has non-zero marginal selection probability $p^*_i$, we have that $p^*_i = s_i-x < \hat{s}_i-y = p_i$. By summing up for all candidates we obtain that $k = \sum_{i=1}^n p_i^* < \sum_{i=1}^n p_i = k$, which is not feasible. Similarly, suppose that $y > x + \varepsilon$, then for any candidate $i$ whose marginal selection probability $p_i$ is not $0$, $p_i = \hat{s}_i-y < s_i-x = p^*_i$. As a result, we get that $\sum_{i=1}^n p^*_i > \sum_{i=1}^n p_i$, which is false. Thus, we see that $x \leq y \leq x+\varepsilon$.
    \item If $\sum_{i=1}^n s_i \leq k$ and $\sum_{i=1}^n \hat{s}_i\geq k$, then Algorithm \ref{alg:lnr_offline} sets $p^*_i = \min\{1, s_i +x\}$ and $p_i = \max\{0,\hat{s}_i-y\}$, for some non-negative $x$ and $y$. Suppose that $x+y > \varepsilon$, then for any candidate $i$ who has non-zero marginal selection probability $p_i$, we have that $p_i = \hat{s}_i-y < s_i+x$ and $p_i < 1$; therefore $p_i < p^*_i$. By summing up the selection probabilities of all the candidates we obtain that $k = \sum_{i=1}^n p_i < \sum_{i=1}^n p^*_i = k$, which is not feasible. Similarly, suppose that $x+y < 0$, then for any candidate $i$ whose marginal selection probability $p^*_i$ is not $1$, $p^*_i = s_i+x < \hat{s}_i-y = p^*_i$. As a result, we get that $\sum_{i=1}^n p^*_i < \sum_{i=1}^n p_i$, which is false. Therefore, we know that $0 \leq x+y \leq \varepsilon$.
\end{enumerate}

In all three cases we see that for any candidate $i$ the selection probabilities of the two solutions satisfy $|p_i-p_i^*|\leq \varepsilon$.
Now, consider any coordinate $i$ for which $p_i^* \ge \sqrt{\varepsilon}$. Since $p_i  \geq p_i^* - \varepsilon$, we have that $p_i \geq (1-\sqrt{\varepsilon})p^*_i$ and, thus, $p_i s_i \geq (1-\sqrt{\varepsilon})p^*_i s_i$. Let $E$ be the set of all individuals who have $p_i^* < \sqrt{\varepsilon}$. 
\begin{enumerate}
    \item If $\sum_{i=1}^n s_i \leq k$, then $p^*_i = \min\{1, s_i +x\}$. Thus, $0 \leq s_i+x < \sqrt{\varepsilon}$ because both $s_i$ and $x$ are non-negative. Since both $p_i$ and $p^*_i$ sum up to $k$, we obtain that $-kx \leq \sum_{i \in E}p_i s_i\leq k\sqrt{\varepsilon}-kx $ and $-kx \leq \sum_{i \in E}p^*_i s_i\leq k\sqrt{\varepsilon}-kx $. Therefore, $\sum_{i \in E}p_i s_i\geq \sum_{i \in E}p^*_i s_i - k\sqrt{\varepsilon}$.
    \item If $\sum_{i=1}^n s_i \geq k$ and $\sum_{i=1}^n \hat{s}_i\geq k$, then $p^*_i = \max\{0, s_i -x\}$ and $p_i = \max\{0,\hat{s}_i-y\}$. Thus, $s_i-x < \sqrt{\varepsilon}$. Furthermore, both solutions assign zero probability to candidates with $s_i-x<-\varepsilon$. As a result, all individuals in $E$ whose $s_i-x > -\varepsilon$, have $s_i$ within an interval of length $\sqrt{\varepsilon}+\varepsilon$. Similarly to case 1, we have that $\sum_{i \in E}p_i s_i\geq \sum_{i \in E}p^*_i s_i - k(\sqrt{\varepsilon}+\varepsilon)$.
\end{enumerate}
  If we split the \linear utility to the utility of individuals in $E$ and not in $E$, we see that 
\[
    \begin{split}
         \sum_{i=1}^n p_i s_i  = \sum_{i \notin E} p_i s_i +\sum_{i \in E} p_i s_i \geq (1-\sqrt{\varepsilon})\sum_{i \notin E} p_i^* s_i + \sum_{i\in E}p_i^* s_i - k(\varepsilon+\sqrt{\varepsilon})\geq \sum_{i=1}^n p_i^* s_i - k(\varepsilon + 2\sqrt{\varepsilon})
    \end{split}
\]
 
\end{proof}
Since for each group we keep only some people on hold, we preserve the probability information of the rejected candidates by considering that each pending candidate represents themselves and a fraction of the rejected people from their group. The fraction of people represented by the $i$-th candidate is denoted by $n_i$. One approach is to maintain at most $k$ people pending per group uniformly at random and have every candidate within a certain group represent the same fraction of people. This basic algorithm keeps $k/\eps$ candidates on hold. Our algorithm reduces the number of people pending by allowing candidates of the same group to represent varying fractions of people. Procedure \roundNat{} is used to eliminate candidates and determine what fraction of people's selection probabilities each candidate represents.

\begin{algorithm}[!ht]
\DontPrintSemicolon
\SetAlgorithmName{Alg.}{}

\KwIn{stream of scores $s_1, \ldots, s_n \in [0,1]$ from $C$,
cohort size $k$, \
constant $\varepsilon \in (0,1]$}
\KwOut{set of accepted candidates}
\For{person $i$ in stream}{
    add  $i$ to group $g$ where $s_i \in ((g-1)\varepsilon, g\varepsilon ]$\;
    $n^g \gets n^g+1$ \tcp{size of group $g$}
    $\hat{s}_i \leftarrow g\varepsilon$; 
    $\textrm{sum} \leftarrow \textrm{sum} + \hat{s}_i$\;
    $n_i \leftarrow 1$; $n \gets n+1$\; 
    \uIf{$\textrm{\upshape sum}<k$}{
       $c \leftarrow \frac{k-sum}{n}$\;
        $s^g \leftarrow g\varepsilon + c$, for any group $g$ \;
        \While{$\exists$ group $g$ with  $s^g>1$}
        {
             $\mathcal{F}_{<1} \gets \{ \textrm{group } f : s^f < 1\}$\;
            $n_{<1} \gets \sum_{f \in \mathcal{F}_{<1}} n^f $ \;
             set $s^f \leftarrow s^f + n^g\frac{(s^g-1)}{n_{<1}}, \forall f \in \mathcal{F}_{<1}$ \;
            $s^g \leftarrow 1$\;
        } 
    }
    \uElseIf{$\textrm{\upshape sum} > k$}
    {
    $c \leftarrow \frac{sum-k}{n}$\;
    $s^g \leftarrow g\varepsilon - c$, for any group $g$\;
    \While{$\exists$ group $g$ with $s^g<0$}
    {
        $\mathcal{F}_{>0} \gets \{ \textrm{group } f : s^f > 0\}$\;
        $n_{>0} \gets \sum_{f \in \mathcal{F}_{>0}} n^f $\;
        $s^f \leftarrow s^f +n^g\frac{s^g}{n_{>0}}, \forall f \in \mathcal{F}_{>0}$\;
     
        $s^g \leftarrow 0$\;
    }
    }
    \lElse{
    $s^g \leftarrow g\varepsilon$, for any group $g$ \
    }
    \For{group $g$}{
        \If{$s^g > 0$}{
        \roundNat{}$(\{n_i:i \textrm{ in }$g$\}$, $v=\lfloor \frac{1}{s^g}\rfloor$)\;
        \tcp{we ensure $n_i v \le 1$}
        delete all $i$ in group $g$ with $n_i = 0$\;
        }
    }
}
$\tilde{s}_i \leftarrow n_i s^g$, for any group $g$ and person $i$ in $g$\;
$\{\tilde{s}_i\} \gets$ \roundProb{}($\{\tilde{s}_i: i \textrm{ in any group}\}$, 1)\;
\Return candidates with non-zero $\tilde{s}_i$\;
\caption{Online Cohort Selection for Linear Utility}
\label{alg:lnr_str}
\end{algorithm}
\begin{theorem}
For any classifier $C$ that assigns scores $s_1, \ldots, s_n$ to $n$ candidates, Algorithm \ref{alg:lnr_str} solves the online $\varepsilon$-approximate fairness-preserving cohort selection problem for any $\varepsilon \in (0,1]$ by selecting a cohort of size $k$ with marginal probabilities $p_1, \ldots, p_n$, achieves \linear utility $\sum_{i=1}^n p_is_i\geq\sum_{i=1}^n p_i^*s_i-k(\varepsilon+2\sqrt{\varepsilon})$ (where $p_1^*, \ldots, p_n^*$ is the optimal solution for the offline fairness-preserving problem for input $s_1, \ldots, s_n$), and keeps at most $O(k + \frac{1}{\varepsilon})$ candidates pending.
\label{thm:linearstreaming}
\end{theorem}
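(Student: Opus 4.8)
The plan is to reduce the analysis of Algorithm~\ref{alg:lnr_str} to that of the offline algorithm run on the rounded scores $\hat s_1,\dots,\hat s_n$, so that the $\varepsilon$-approximate fairness guarantee and the utility bound follow directly from Lemma~\ref{lem:epsilon}. The key point is that the adjusted group scores $s^g$ produced by the water-filling while-loops depend only on the (rounded) scores and not on any internal randomness, and that after the last candidate has been processed these $s^g$ coincide exactly with the marginals that Algorithm~\ref{alg:lnr_offline} assigns to group $g$ on input $\hat s_1,\dots,\hat s_n$; in particular $\sum_g n^g s^g = k$, and $\sum_g n^g s^g \le k$ at every intermediate time (even when fewer than $k$ candidates have arrived and everything is clipped to $1$). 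Thus it suffices to show that every original candidate $j$ assigned to group $g$ --- including those deleted mid-stream --- is placed in the final cohort with probability exactly $s^g$.

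For this I would track the integer weight $n_j$ across the stream. Each invocation of \roundNat{} is a dependent-rounding step, so by Lemma~\ref{lem:rounding} it preserves $\mathbb{E}[n_i]$ for every surviving candidate $i$ and freezes $n_i = 0$ once $i$ is deleted; iterating expectations along the stream gives $\mathbb{E}[n_j] = 1$ at termination for every original $j$ in group $g$. At the end we set $\tilde s_i = n_i s^g$, and since each \roundNat{} call with parameter $v^g = \lfloor 1/s^g\rfloor$ leaves the weights bounded by $v^g \le 1/s^g$, we get $\tilde s_i \le 1$, so \roundProb{} is well-defined; moreover $\sum_i \tilde s_i = \sum_g s^g \sum_{i\in g} n_i = \sum_g s^g n^g = k$, so by Corollary~\ref{cor:round_prob} exactly $k$ candidates are returned and each candidate $i$ is selected with probability $\tilde s_i$. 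Combining, $\mathbb{P}[j \text{ selected}] = \mathbb{E}[\tilde s_j] = s^g\,\mathbb{E}[n_j] = s^g$, so the marginals of Algorithm~\ref{alg:lnr_str} match those of Algorithm~\ref{alg:lnr_offline} on $\hat s_1,\dots,\hat s_n$; the $\varepsilon$-approximate fairness and the bound $\sum_i p_i s_i \ge \sum_i p_i^* s_i - k(\varepsilon + 2\sqrt{\varepsilon})$ then follow from Lemma~\ref{lem:epsilon}.

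The remaining and most delicate task is the bound on the number of pending candidates, which must hold at every point of the stream. After the adjustment triggered by a new candidate, the invariant $\sum_g n^g s^g \le k$ holds with $n^g$ the current group sizes, and the \roundNat{} call on group $g$ --- which rounds the integer weights to multiples of $v^g$ summing to $n^g$ --- leaves at most $\lfloor n^g/v^g\rfloor + 1$ of them nonzero. Using the elementary inequality $1/v^g = 1/\lfloor 1/s^g\rfloor \le 2 s^g$ for $s^g \in (0,1]$, group $g$ contributes at most $n^g/v^g + 1 \le 2 n^g s^g + 1$ pending candidates (and members of groups with $s^g = 0$ have marginal $0$ and are rejected, contributing none), so summing over the $m = \lceil 1/\varepsilon\rceil$ groups leaves at most $2\sum_g n^g s^g + m \le 2k + \lceil 1/\varepsilon\rceil = O(k + 1/\varepsilon)$ candidates pending throughout. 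The point I expect to require the most care is verifying that \roundNat{} is always applied to a valid input list --- i.e.\ that the current weights never exceed the current cap $v^g$ --- which amounts to controlling how $s^g$, and hence $v^g$, moves as $\sum_i \hat s_i$ crosses $k$; tracking this feasibility condition together with the counting above is the main obstacle.
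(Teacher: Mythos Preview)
Your proposal is correct and follows essentially the same approach as the paper: reduce to Algorithm~\ref{alg:lnr_offline} on the rounded scores by showing that the final $s^g$ equals the offline marginal for group $g$ and that $\mathbb{E}[n_j]=1$ via Lemma~\ref{lem:rounding}, then invoke Lemma~\ref{lem:epsilon}; the pending bound via $\lceil n^g/v^g\rceil \le 2n^g s^g + 1$ summed over the $\lceil 1/\varepsilon\rceil$ groups is exactly the paper's argument. The feasibility issue you flag as the main obstacle is dispatched in the paper by the one-line observation that as candidates arrive the group probability $s^g$ is non-increasing, so $v^g = \lfloor 1/s^g\rfloor$ is non-decreasing and previously rounded weights never exceed the current cap.
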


\begin{proof}
We want to prove that Algorithm \ref{alg:lnr_str} and the algorithm described in Lemma \ref{lem:epsilon} compute the same solution. In other words, we show that the probability $p_i$ that individual $i$ is selected by Algorithm \ref{alg:lnr_str} is equal to the probability $q_i$ that $i$ is selected by Algorithm \ref{alg:lnr_offline} applied to the modified individual scores $\hat{s}_1, \ldots, \hat{s}_n$, where $\hat{s}_i = g\varepsilon$ if $s_i \in ((g-1)\varepsilon, g\varepsilon] $. The first step is to prove that the scores $\tilde{s}_1, \tilde{s}_2, \ldots, \tilde{s}_n$ which are the input of the final rounding in line 28 satisfy the following properties:
\begin{enumerate}
    \item $\sum_{i=1}^n \tilde{s}_i = k$
    \item $\mathbb{E}[\tilde{s}_i] = q_i$, $\forall i \in [n]$.
\end{enumerate}

We saw in the proof of Theorem \ref{thm:linearoffline} that for the offline algorithm all candidates of the same group have the same selection probability $q_i$. We will show that for any person $i$ we have $q_i = s^g$, where $s^g$ is the final calculated probability of any member of group $g$ to which $i$ belongs. The $s^g$ we are referring to is calculated by the final execution of lines 6-22. These lines of Algorithm \ref{alg:lnr_str} describe the same procedure as lines 2-15 of Algorithm \ref{alg:lnr_offline}, but from the point of view of groups instead of individual candidates. We consider three cases that depend on the value of the sum of scores from $C$. 

\textbf{Case 1}: $\sum_{i=1}^n\hat{s}_i = k$.
The offline algorithm considers the scores as selection probabilities and, thus, assigns probability $q_i = \hat{s}_i= g\varepsilon$ to candidate $i$ of group $g$. Similarly, Algorithm \ref{alg:lnr_str} assigns probability $s^g=g\varepsilon$ to group $g$ and in line 27 sets $\tilde{s}_i = n_is^g = n_i g\varepsilon = n_i q_i$ for the people who have not been rejected.

\textbf{Case 2}: $\sum_{i=1}^n\hat{s}_i < k$.
The process that calculates $s^g$ starts by setting $s^g=g\varepsilon+c$. The offline algorithm initializes the probability of $i$ who is a member of $g$ as $q_i=\hat{s}_i+c = g\varepsilon+c$. If for all groups $g$, $g\varepsilon+c \leq 1$, then the adjustment stops for both algorithms and we have that for any group $g$, any member $i$ of $g$ has $q_i = s^g$. If there exists $g$ such that $q_i > 1$, then the corresponding $s^g$ exceeds $1$ by the same amount. Additionally, at this point the probabilities of all people in the same group as $i$ will exceed $1$ by this amount. Therefore, the $n_{<1}$ of the two algorithms is the same. The offline algorithm runs the loop for all members of all groups that have individuals with $q_i>1$. The online version aggregates the excess mass from all $n^g$ members of group $g$ and redistributes it all at once instead of running separate iterations for every member of the group as the offline does. No extra mass is added after the initialization but it is only moved from group to group. Hence, we obtain that at the end of this process $q_i=s^g$.

\textbf{Case 3}:$\sum_{i=1}^n\hat{s}_i > k$.
Similar to case 2, if candidate $i$ is a member of group $g$, then $q_i=s^g$.
Those who were rejected have $\tilde{s}_i = 0$ and $n_i=0$. As a result,  we see that for any person $i$, $\tilde{s}_i = n_iq_i$. From this we can infer that
\[
\begin{split}
\sum_{i=1}^n \tilde{s}_i = \sum_{i=1}^n n_i q_i
= \sum_{g=0}^m \sum_{i \in g} n_i s^g = \sum_{g=0}^m n^g s^g = \sum_{g=0}^m \sum_{i \in g} q_i = k.
\end{split}
\]

As new people are added to the groups, the group probabilities $s^g$ become smaller in order for the sum of probabilities $n^gs^g$ to be equal to $k$. Therefore, the maximum number $v$ of people that can be represented by a candidate in a given group either stays the same or increases after every iteration. By Lemma \ref{lem:rounding}, we obtain that the rounding process maintains the expected value of the number of people each candidate represents equal to their initial value. Since every person begins by representing only themselves, we have that for the $i$-th candidate $\mathbb{E}[n_i]=1$. Finally, we obtain
$
\mathbb{E}[\hat{s}_i] = \mathbb{E}[n_i q_i] = \mathbb{E}[n_i]q_i= q_i$,
 because the calculation of $s^g$s is deterministic. The final rounding procedure makes the final decisions and outputs $0$ if the candidate is rejected and $1$ if the candidate is selected. Due to properties 1 and 2, the probability of candidate $i$ being selected by the online algorithm is $q_i$. Thus, Algorithm \ref{alg:lnr_str} and the offline algorithm with input scores rounded to multiples of $\varepsilon$ have the same selection probabilities. By Lemma \ref{lem:epsilon}, Algorithm \ref{alg:lnr_str} preserves fairness $\varepsilon$-approximately and achieves \linear utility $\sum_{i=1}^n p_is_i\geq\sum_{i=1}^n p_i^*s_i-k(\varepsilon+2\sqrt{\varepsilon})$, where $p_1^*, \ldots, p_n^*$ is the optimal solution for the offline fairness-preserving problem for input $s_1, \ldots, s_n$.

Because of the online probability adjustments, we have that for $n\geq k$ the sum of all the probabilities is equal to $k$ at the end of each loop. Therefore, we have $\sum_{g=0}^m n_gs^g = k$. If $s^f>0$, each person can represent at most $\lfloor \frac{1}{s^g}\rfloor$ candidates. By Lemma 2, the number of representatives per group is at most \[\left\lceil\frac{n^g}{\lfloor\frac{1}{s^g}\rfloor}\right\rceil \leq \frac{n^g}{\lfloor\frac{1}{s^g}\rfloor}+1 \leq 2 n^g s^g+1, \]
since $\frac{n_g}{\lfloor\frac{1}{s^g}\rfloor}\leq 2 n_g s^g$. If we sum up the number of representatives for all groups we obtain
\[
\sum_{g=0}^m\left\lceil\frac{n^g}{\left \lfloor\frac{1}{s^g}\right\rfloor}\right\rceil\leq  \sum_{g=0}^m(2 n^g s^g+1) = 2k+\left\lceil\frac{1}{\varepsilon}\right\rceil+1.
\]
This completes the proof.
\end{proof}
\begin{cor}
If $C$ is individually fair, then Algorithm \ref{alg:lnr_str} is $\varepsilon$-individually fair.
\end{cor}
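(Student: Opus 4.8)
The plan is to derive the corollary directly from Theorem~\ref{thm:linearstreaming}, mirroring the argument used for the offline case in Corollary~\ref{cor:fair}. First I would invoke Theorem~\ref{thm:linearstreaming}, which guarantees that the marginal selection probabilities $p_1,\dots,p_n$ produced by Algorithm~\ref{alg:lnr_str} satisfy the $\varepsilon$-approximate fairness-preserving constraint, i.e.\ $|p_i - p_j| \le |s_i - s_j| + \varepsilon$ for every pair of candidates $i,j$. Second, I would use the hypothesis that $C$ is individually fair with respect to some (unknown) metric $\mathcal{D}$: by Definition~\ref{def:indiv-fair-classify} this means the scores satisfy $|s_i - s_j| \le \mathcal{D}(i,j)$ for all $i,j$.

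Chaining the two inequalities then yields $|p_i - p_j| \le \mathcal{D}(i,j) + \varepsilon$ for all $i,j$, which is exactly the condition appearing in the definition of $\varepsilon$-individual fairness. Since $\varepsilon \in (0,1]$ by assumption, this shows that the cohort-selection probabilities output by Algorithm~\ref{alg:lnr_str} are $\varepsilon$-individually fair, completing the proof.

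There is essentially no obstacle here: the substantive content is already packaged inside Theorem~\ref{thm:linearstreaming}, and all that remains is the syntactic observation that composing the algorithm's $\varepsilon$-approximate fairness-preserving guarantee with the black-box classifier's individual-fairness guarantee preserves the metric bound up to the additive slack $\varepsilon$. The one point worth emphasizing in the write-up is that, just as in the offline setting, the argument never needs access to $\mathcal{D}$: the relaxed constraint $|p_i - p_j| \le |s_i - s_j| + \varepsilon$ enforced by the algorithm automatically inherits whatever metric the input scores happen to respect, so any stronger fairness satisfied by $C$ is likewise preserved up to $\varepsilon$.
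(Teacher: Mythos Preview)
Your proposal is correct and matches the paper's own proof essentially line for line: both chain the $\varepsilon$-approximate fairness-preserving guarantee $|p_i-p_j|\le |s_i-s_j|+\varepsilon$ from Theorem~\ref{thm:linearstreaming} with the individual-fairness bound $|s_i-s_j|\le \mathcal{D}(i,j)$ to conclude $|p_i-p_j|\le \mathcal{D}(i,j)+\varepsilon$.
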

\begin{proof}
If the individual probabilities of selection by $C$ are individually fair with respect to a metric $\mathcal{D}$, we obtain that $\forall i, j \in [n] $
\[
|p_i-p_j| \leq |s_i-s_j|+\varepsilon \leq \mathcal{D}(i,j)+\varepsilon.
\]
\end{proof}
\section{Ratio Utility}
\label{sec:ratio}
In Definition \ref{def:maxmin_utility} we consider the ratio utility obtained by a selection algorithm, and show that this is controlled by the minimum ratio of $p_i$ and $s_i$.

\subsection{Offline Cohort Selection}
We begin with the scenario of selecting $k$ of $n$ individuals when all candidates are known in advance. 
\cite{DworkI18} give a solution to this scenario (with non-optimal utility, as shown in Example \ref{toy_example}); we present Algorithm \ref{alg:fcs_offline}, an alternate one that achieves the optimal \maxmin utility. 
Notice that the sum of classifier scores $\sum_i s_i$ need not add up to $k$. 
If the sum exceeds $k$, the algorithm simply scales down all probabilities so that the new sum is $k$. It is not hard to show that this operation preserves fairness and gives optimal utility. 
This multiplicative scaling will not work when the sum is smaller than $k$, however, since it increases the probability difference among candidates and can be unfair. 
Intuitively, the solution in this case is to additively increase all probabilities by the same amount, thus preserving fairness. However, some care is needed, as no probability can exceed $1$.

\begin{algorithm}[htb]
\DontPrintSemicolon
\SetAlgorithmName{Alg.}{}

\KwIn{
scores $s_1,\ldots, s_n \in [0,1]$ from $C$, cohort size $k$
}
\KwOut{
set of accepted candidates
}
$\textrm{sum} \gets \sum_{i=1}^n s_i$ \;
\uIf{sum $< k$}{
      $c \gets \frac{k-\textrm{sum}}{n}$\;
      $p_i \gets s_i + c, \forall i\in [n]$\;
      \While{$\exists p_i > 1$}{
      $\mathcal{S}_{<1}\gets \{j:p_j<1\}$\;
        $p_j \gets p_j+\frac{p_i-1}{|\mathcal{S}_{<1}|}, \forall j: p_j<1$\;
        $p_i \gets 1$\;
      }
      \tcp{Now list sums to $k$ and all $p_i \in [0,1]$}
}
\Else{
  $p_i \gets s_i \cdot \frac{k}{\textrm{sum}}, \forall i\in [n]$\;
}
$\{\tilde{p}_i\} \gets$\roundProb{}($p_1,\ldots, p_n, 1)$\;
\Return candidates with non-zero $\tilde{p}_i$\;
\caption{Offline Cohort Selection for Ratio Utility}
\label{alg:fcs_offline}
\end{algorithm}

\begin{lemma}
If $sum < k$, Algorithm \ref{alg:fcs_offline} increases each input score $s_i$ to a final output $p_i = s_i+\alpha_i \geq s_i$ such that all of the candidates $j$ with $p_j < 1$ receive the same cumulative adjustment value $\alpha_j = v$.
\label{lem:equal_filling}
\end{lemma}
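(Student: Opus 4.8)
The plan is to recognize the first branch of Algorithm~\ref{alg:fcs_offline} as a ``water-filling'' procedure and to show that, when the while loop terminates, there is a single water level $v \ge 0$ such that $p_i = \min\{s_i+v,\,1\}$ for every candidate $i$. The lemma then follows immediately: every candidate with $p_j<1$ has $\alpha_j = p_j - s_j = v$, and every candidate has $\alpha_i = \min\{s_i+v,1\} - s_i \ge 0$, since both arguments of the minimum are at least $s_i$ (using $v\ge 0$ and $s_i\le 1$). I would establish this through three observations about the loop.

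The first is the mass invariant $\sum_{i=1}^n p_i = k$. The initialization $p_i \leftarrow s_i + c$ with $c=(k-\sum_i s_i)/n$ makes the sum equal to $\sum_i s_i + nc = k$, and one pass of the loop subtracts $p_{i}-1$ from the picked coordinate and adds exactly that total back over $\mathcal{S}_{<1}$, so the sum is unchanged. An immediate consequence I would record is that $\mathcal{S}_{<1}\neq\emptyset$ whenever the loop body executes: if every coordinate were $\ge 1$ while some picked coordinate were $>1$, the sum would exceed $n\ge k$, contradicting the invariant. Hence in each iteration $t$ the amount $\beta_t := (p_{i_t}-1)/|\mathcal{S}_{<1}^{(t)}|$ redistributed to each low coordinate is well defined and nonnegative, where $i_t$ denotes the picked coordinate.

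The second observation, which I expect to be the main obstacle, is that every candidate $i$ whose \emph{final} value satisfies $p_i<1$ belongs to $\mathcal{S}_{<1}$ in \emph{every} iteration and is never the picked coordinate. The key is monotonicity: the loop only (a) adds nonnegative bumps to coordinates currently below $1$, and (b) sets a coordinate currently above $1$ down to exactly $1$; therefore once a coordinate's value reaches $\ge 1$ it stays $\ge 1$ for the rest of the run (a value equal to $1$ is never bumped and never picked; a value above $1$ can only be reset to $1$). So a coordinate that ends below $1$ was below $1$ throughout, i.e.\ it lay in $\mathcal{S}_{<1}^{(t)}$ for all $t$, and it was never picked, since picking requires a value strictly above $1$. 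The delicate point here is the transient mid-loop configuration in which several coordinates simultaneously exceed $1$ before being clipped one at a time; this is exactly why I reason about ``coordinates that end below $1$'' instead of trying to maintain a clean closed form for the vector $p$ after each step. Termination is routine: each iteration permanently fixes one fresh coordinate to the value $1$ (the picked $i_t$, which is thereafter neither bumped nor picked), so the loop runs at most $n$ times.

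Finally I would put the pieces together. Set $v := c + \sum_t \beta_t$. By the second observation, a candidate $i$ with $p_i<1$ had its value altered only by the initial shift $c$ and by a bump of $\beta_t$ in every iteration $t$, so its cumulative adjustment is precisely $\alpha_i = v$, the same for all such candidates, and $v \ge c > 0$ since every $\beta_t \ge 0$ and $\sum_i s_i < k$. For the remaining candidates, the loop exits only when no coordinate exceeds $1$, so each such candidate ends with $p_i = 1$ and hence $\alpha_i = 1 - s_i \ge 0$ because $s_i\in[0,1]$. This proves both halves of the lemma. The full identity $p_i = \min\{s_i+v,1\}$ also drops out if wanted: a candidate ending at $1$ either started with $s_i + c \ge 1$, or first reached $\ge 1$ at the end of some iteration $\tau$, where its value equalled $s_i + c + \sum_{t\le\tau}\beta_t \ge 1$; in either case $s_i + v \ge 1$.
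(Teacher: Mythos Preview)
Your proof is correct and follows essentially the same approach as the paper's: both hinge on the monotonicity observation that once a coordinate reaches $\ge 1$ it never drops below $1$, so any candidate ending with $p_j<1$ was in $\mathcal{S}_{<1}$ at every iteration and therefore accumulated the identical total bump $c+\sum_t\beta_t$. Your version is more thorough---you additionally justify $\mathcal{S}_{<1}\neq\emptyset$ via the mass invariant, argue termination, and derive the closed form $p_i=\min\{s_i+v,1\}$---but the core argument matches the paper's induction.
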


\begin{proof}
First we observe that $p_i \ge s_i, \forall i$. 
At each step of the algorithm the value of $p_i$ either increases, or is clipped at $1$.
Since $s_i \in [0, 1], \forall i$, we therefore know that $p_i \ge s_i$.

Algorithm \ref{alg:fcs_offline} initially adds a constant amount $c$ to every candidate, then repeatedly redistributes overflow evenly across all candidates with $s_i + \alpha_i < 1$. 
We prove by induction that for all candidates $i$ with $p_i < 1$, their increment $\alpha_i$ are the same. 
Consider two candidates $i,j$ such that their final $p_i, p_j < 1$. 
As argued above, both $p_i, p_j$ are smaller than $1$ throughout the execution of the algorithm. 
At the beginning, $\alpha_i = \alpha_j = c$ so the claim holds.

Given that these candidates are equal at a certain step $d$, we also know they will remain equal at the next step $d+1$. 
By assumption, neither candidate reaches $1$ on either of these steps; therefore they will both receive an adjustment. 
When an adjustment is made, all candidates are affected equally.
\end{proof}

\begin{theorem}
For any classifier $C$ that assigns scores $s_1, \ldots, s_n$ to $n$ candidates, Algorithm \ref{alg:fcs_offline} solves the fairness-preserving cohort selection problem by selecting $k$ candidates with marginal probabilities $p_1, \ldots, p_n$ that achieve the optimal value of \maxmin utility $ \min_i \frac{p_i}{s_i}$. \label{thm:fcs_offline_optimal_utility}
\end{theorem}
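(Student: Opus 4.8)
The plan is to split the claim into feasibility --- the output is a valid fair cohort of size $k$ with marginals $p_1,\dots,p_n$ --- and optimality of the value $\min_i p_i/s_i$. For feasibility, first note that in both branches the adjusted vector has $\sum_i p_i = k$ and $p_i\in[0,1]$: when $\textrm{sum}\ge k$ this is immediate from $p_i = s_i\,k/\textrm{sum}$, and when $\textrm{sum}<k$ the initial shift makes the sum exactly $k$ while the while-loop only redistributes mass among the indices with $p_j<1$ (Lemma~\ref{lem:equal_filling}). Corollary~\ref{cor:round_prob} and Lemma~\ref{lem:rounding} then give that \roundProb{} outputs a cohort of size exactly $k$ with $\Pr[i\textrm{ selected}]=p_i$. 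Fairness-preservation is checked per branch: for $\textrm{sum}\ge k$ we have $|p_i-p_j| = (k/\textrm{sum})|s_i-s_j|\le|s_i-s_j|$; for $\textrm{sum}<k$, Lemma~\ref{lem:equal_filling} lets us write $p_i=s_i+v$ for the ``unclipped'' candidates ($p_i<1$) and $p_i=1$ for the rest, with a common $v\ge0$, and then a pair of unclipped candidates has difference exactly $|s_i-s_j|$, a pair of clipped candidates has difference $0$, and if $i$ is clipped and $j$ is not then the uniform additive process forces $s_i\ge s_j$ and $s_i+v\ge1$, so $|p_i-p_j| = 1-(s_j+v)\le s_i-s_j=|s_i-s_j|$.

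For optimality, fix a maximum-score candidate $m$, $s_{\max}=s_m$ (assume $s_{\max}>0$, else the instance is degenerate). I would first read off the value $r:=\min_i p_i/s_i$ achieved by the algorithm. If $\textrm{sum}\ge k$, every ratio equals $k/\textrm{sum}$, so $r=k/\textrm{sum}$. If $\textrm{sum}<k$ and the while-loop never fires, then $p_i=s_i+c$ with $c=(k-\textrm{sum})/n$, so $p_i/s_i=1+c/s_i$ is decreasing in $s_i$ and $r=1+c/s_{\max}$. If the loop does fire, the maximum-score candidate is always among the clipped (the shift is uniform and $m$ is the largest), hence $p_m=1$; one then checks, using that a clipped candidate satisfies $s_i>1-v$, that clipped candidates have strictly smaller ratio than unclipped ones, so the minimum is realized by $m$ and $r=1/s_{\max}$. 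Uniformly, in the $\textrm{sum}<k$ case $r=\min\{1/s_{\max},\,1+c/s_{\max}\}$.

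It then remains to show that no fairness-preserving $q$ with $\sum_i q_i=k$, $q_i\in[0,1]$, $|q_i-q_j|\le|s_i-s_j|$ achieves $t:=\min_i q_i/s_i>r$. From $t\,s_{\max}\le q_m\le1$ we get $t\le1/s_{\max}$. If $\textrm{sum}\ge k$, summing $q_i\ge t\,s_i$ gives $k=\sum_iq_i\ge t\,\textrm{sum}$, so $t\le k/\textrm{sum}=r$. If $\textrm{sum}<k$: when $t\le1$ trivially $t\le1+c/s_{\max}$ since $c\ge0$; when $t>1$, fairness against $m$ gives $q_j\ge q_m-(s_{\max}-s_j)\ge t\,s_{\max}-s_{\max}+s_j$ for every $j$, and summing over $j$ yields $k\ge n\,s_{\max}(t-1)+\textrm{sum}$, i.e.\ $t\le1+c/s_{\max}$; together with $t\le1/s_{\max}$ this gives $t\le r$.

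The step I expect to be the main obstacle is pinning down $\min_i p_i/s_i$ in the $\textrm{sum}<k$ branch with clipping: one must argue both that the top-scored candidate is always clipped and that it (a clipped candidate), not an unclipped one, realizes the overall minimum ratio --- which needs the quantitative fact $s_i>1-v$ for clipped $i$. Once the value $r$ is correctly identified, the matching impossibility is the clean observation that the fairness constraint against the maximum-score candidate turns the cardinality constraint $\sum_j q_j=k$ into $t\le1+c/s_{\max}$, while the probability cap $q_m\le1$ supplies $t\le1/s_{\max}$.
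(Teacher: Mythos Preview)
Your proof is correct and covers both the fairness-preservation and optimality claims. The fairness analysis matches the paper's case split exactly. For optimality, the paper and you rely on the same two facts --- the probability cap $q_m\le 1$ on the maximum-score candidate and the cardinality constraint $\sum_j q_j=k$ combined with the fairness inequalities --- but organize the argument differently.

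The paper proceeds by contradiction: it assumes a competitor $A'$ with $\min_i p'_i/s_i>\min_i p_i/s_i$, locates the extremal candidates for $A$ and $A'$, and runs through subcases (is $T_{(=1)}$ empty? is the extremal index the same?) to force either $p'_l>1$ or $\sum_i p'_i>k$. You instead compute the value $r=\min_i p_i/s_i$ in closed form (namely $k/\textrm{sum}$ when $\textrm{sum}\ge k$, and $\min\{1/s_{\max},\,1+c/s_{\max}\}$ when $\textrm{sum}<k$) and then bound any feasible $t=\min_i q_i/s_i$ directly: $t\le 1/s_{\max}$ from $q_m\le 1$, and $t\le 1+c/s_{\max}$ by summing the fairness lower bounds $q_j\ge q_m-(s_{\max}-s_j)$. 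This is a cleaner route, since it avoids the adversary $A'$ altogether and turns the multi-case contradiction into two one-line inequalities; the tradeoff is that you first need the auxiliary step of pinning down $r$ exactly in the clipping case (your observation that every clipped $i$ has $s_i>1-v$, hence $1/s_i<1/(1-v)\le 1+v/s_j$ for every unclipped $j$), which the paper sidesteps by arguing about the extremal element only. Both arguments are complete; yours is tighter and more quantitative.
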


\begin{proof}
We first show that Algorithm \ref{alg:fcs_offline} either preserves or decreases the difference between any pair of individuals. 
Let $p_i$ denote the probability that candidate $i$ is selected by Algorithm \ref{alg:fcs_offline}.
Thus we must show, for all $i,j$, $|p_i - p_j| \leq |s_i-s_j|$.
We apply Algorithm \ref{alg:fcs_offline} to a set of candidates and obtain two disjoint sets of output candidates: let $T_{(=1)}$ be the set of candidates whose output $p_i = 1$, and let $T_{(<1)}$ be the set of candidates whose output $p_i < 1$.
Let $T_{all}$ be the union of these two disjoint sets.
The algorithm behaves differently depending on the value of $\sum_i s_i$.

\begin{enumerate}
    \item First, consider when $\sum_i s_i < k$. Let $\alpha_i$ be the cumulative adjustment received by candidate $i$ during the algorithm. The adjustments are exactly chosen such that $\sum_i (s_i + \alpha_i) = k$.
    Then we apply \roundProb, where we know by Lemma \ref{lem:rounding} that the probability of being selected is preserved during rounding $\Pr[\roundProb \text{ selects }i] = p_i = s_i + \alpha_i$
    We thus need to cover three cases:
    \begin{enumerate}
        \item First, we consider two elements $s_i, s_j\in T_{(=1)}$, $
            |p_i - p_j|  = |1-1| \leq |s_i-s_j|$.
        \item Next, we have $s_i, s_j\in T_{(<1)}$
        By Lemma \ref{lem:equal_filling}, these candidates all receive the same adjustment. Let this adjustment be called $b$,
            $ |p_i - p_j| = |s_i + b - (s_j + b)| = |s_i - s_j|$.
         
        \item Finally we have $s_i \in T_{(=1)}, s_j\in T_{(<1)}$. Note that $\alpha_i \leq \alpha_j$, and $s_i > s_j$. We see that 
            $ |p_i - p_j|  = |s_i + \alpha_i - (s_j + \alpha_j)|  = |s_i - s_j + \alpha_i - \alpha_j|
              \leq |s_i-s_j|$.
    \end{enumerate}
    
    \item Next, consider $\sum_i s_i \geq k$. Here we know $\frac{k}{\sum_i s_i} \leq 1$.
    By Lemma \ref{lem:rounding}, $p_i = \frac{k }{\sum_l s_l} s_i$. Then, for candidates $i$ and $j$, we have:
    \begin{align*}
    |p_i - p_j|  = \left|\frac{k }{\sum_l s_l} s_i - \frac{k }{\sum_l s_l} s_j\right| 
     = \left|\frac{k}{\sum_l s_l}(s_i - s_j)\right| 
     \leq |s_i - s_j|
    \end{align*}
\end{enumerate}

Thus we see that $|s_i-s_j| \geq |p_i - p_j|$.

We next show that among fairness-preserving solutions to the cohort selection problem,
Algorithm \ref{alg:fcs_offline} achieves the maximum value of $  \min_{i}{\frac{p_i}{s_i}}$.
Let $T_{(<1)}$ and $T_{(=1)}$ be defined as above, and note that $T_{(=1)}$ may be empty.
Consider for contradiction an arbitrary algorithm $A'$ with probabilities of selection $p'_i$ which also preserves fairness while achieving better utility:
$    \min_{i}{\frac{p'_i}{s_i}} > \min_{i}\frac{p_i}{s_i}$.

\begin{enumerate}
    \item \label{itm:fuocs_proof_1} 
    First, consider the case where $\sum_i s_i < k$ and $T_{(=1)}$ is nonempty. 
    Let individual $j$ have the most extreme ratio: $\frac{p_j}{s_j} =  \min_{i}\frac{p_i}{s_i}$. 
    Consider how to make this ratio minimal.
    For all individuals in $T_{(<1)}$, increasing $s_i$ will reduce this ratio: $\frac{p_i}{s_i} = \frac{s_i + b}{s_i} = 1 + \frac{b}{s_i}$.
    If any individuals $j$ are in $T_{(=1)}$, they will have received some adjustment $\alpha_j \le b$, and thus they will have an even smaller ratio.
    Therefore, we know that individual $j$ must have $s_j \geq s_i, \forall i$ and be a member of $T_{(=1)}$, and we have that for Algorithm $A$, $\min_{i}\frac{p_i}{s_i}  = \frac{p_j}{s_j} = \frac{1}{s_j}$.

    By our definition of $A'$, there is some potentially different most extreme element $l$ providing better utility: $\frac{p'_l}{s_l} = \min_i \frac{p'_i}{s_i} > \frac{p_j}{s_j}$. 
    
    \begin{enumerate}
        \item  If $s_l=s_j$, the largest minimum ratio $\frac{p'_l}{s_l}$ that $A'$ can achieve is 
        $\frac{1}{s_l} = \frac{1}{s_j}$, which is the same ratio as Algorithm A and contradicts our assumption.
        \item If $s_l \neq s_j$, we know $s_l < s_j$, since we already know that $s_j$ is at least as big as all other elements.
        Since $p_j$ is $1$, we know that $\frac{p_j}{s_j}\geq \frac{p_j'}{s_j} > \frac{p_l'}{s_l}$. 
        This contradicts our assumption that $A'$ has greater utility than $A$.
    \end{enumerate}

    \item Next, consider the case where $\sum_i s_i < k$ and $T_{(=1)}$ group is empty. Again, let element $j$ be the most extreme element for algorithm $A$, i.e. $\min_i \frac{p_i}{s_i} = \frac{p_j}{s_j}$.
    Consider $A'$ on its most extreme element $l$ constructed as in Item \ref{itm:fuocs_proof_1}. 
    \begin{enumerate}
        \item  This time, if $s_l = s_j$, we may have $p'_l > p_l$, achieving greater utility. 
        To preserve the distances, if $A'$ adjusts element $l$ by a certain amount, it must adjust all smaller elements by at least the same amount.   
        However, we know that $\sum_i p_i = k$, so this required extra adjustment will mean that $\sum_i p'_i > k$, and $A'$ will not be selecting exactly $k$ candidates.
        \item If $s_l < s_j$, then person $j$ has the largest score, and similar to (2a), if $p_j'>p_j$ we see that $\sum_i p_i' > k$. If $p_j'\leq p_j$ we have the contradiction in (1b).
    \end{enumerate}
    
    \item Finally, consider the case where $\sum_i s_i \geq k$.
     In Algorithm \ref{alg:fcs_offline}, all elements will be multiplied by a factor of $\frac{k}{\sum_i s_i}$, and then by Lemma \ref{lem:rounding}, any element will satisfy $p_i = \frac{k }{\sum_l s_l} s_i$, so we have
    $
    \min_{i}{p_i/s_i} = \min_{i} \left(\frac{k }{\sum_l s_l} s_i \right)/s_i 
        = \frac{k}{\sum_i s_i}$
     Notice that in this case, all elements are adjusted by the same constant ratio. 
     Using extreme element $l$ for algorithm $A'$ as constructed previously, and again supposing that this element is more extreme than any element in algorithm $A$, we derive that $ p'_l > s_l\frac{k}{\sum_i s_i}$.
     Since element $l$ was the minimum ratio for algorithm $A'$, it follows that for all $i$, $
         p'_i > s_i\bigg(\frac{k}{\sum_i s_i}\bigg) $.
     Taking the sum on both sides we see $
         \sum_{i}p'_i > \sum_{i}{s_i \left(\frac{k}{\sum_i s_i}\right)} = k$.
    As before, the number of elements chosen by algorithm $A'$ will be more than $k$, which is a contradiction. 
\end{enumerate}
Thus we conclude that $A'$ cannot exist.
\end{proof}
\begin{cor}
\label{cor:ratio_fair}
If $C$ is individually fair, then Algorithm \ref{alg:fcs_offline} is individually fair.

\end{cor}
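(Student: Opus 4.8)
The plan is to mirror the proof of Corollary~\ref{cor:fair} and simply chain two inequalities. First I would recall that the first half of Theorem~\ref{thm:fcs_offline_optimal_utility} already establishes the distance-contraction property of Algorithm~\ref{alg:fcs_offline}: for every pair of candidates $i,j$, the marginal selection probabilities satisfy $|p_i - p_j| \le |s_i - s_j|$, with the case analysis on $\sum_i s_i \lessgtr k$ (and, when $\sum_i s_i < k$, on membership in $T_{(=1)}$ versus $T_{(<1)}$) having been carried out there. Next I would invoke the hypothesis that $C$ is individually fair, so by Definition~\ref{def:indiv-fair-classify} there is a metric $\cD$ with $|s_i - s_j| \le \cD(i,j)$ for all $i,j$. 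Combining the two gives, for all $i,j$,
\[
|p_i - p_j| \;\le\; |s_i - s_j| \;\le\; \cD(i,j),
\]
which is exactly the individual-fairness condition of Definition~\ref{def:cohort-selection-prob} for the cohort selected by Algorithm~\ref{alg:fcs_offline}.

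There is essentially no obstacle here, since all the substantive work was done in Theorem~\ref{thm:fcs_offline_optimal_utility}; the corollary is a one-line consequence. The only point worth a remark is that the fairness guarantee must be stated for the \emph{actual} randomized selection, not merely for the intermediate numbers $p_i$: as already noted inside Theorem~\ref{thm:fcs_offline_optimal_utility} via Lemma~\ref{lem:rounding}, the call to \roundProb{} returns a cohort in which candidate $i$ is included with probability exactly $p_i$, so $\Pr[i \in S] = p_i$ and the inequality above is precisely a bound on $|\Pr[i\in S] - \Pr[j\in S]|$. Hence the output of Algorithm~\ref{alg:fcs_offline} is individually fair with respect to $\cD$.
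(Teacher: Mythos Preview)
Your proposal is correct and matches the paper's approach exactly: the paper simply says the proof is analogous to that of Corollary~\ref{cor:fair}, which is precisely the two-step chaining $|p_i - p_j| \le |s_i - s_j| \le \cD(i,j)$ you spell out. Your additional remark that \roundProb{} ensures $\Pr[i\in S]=p_i$ is a helpful clarification but not a departure from the paper's argument.
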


\begin{proof}
The proof is analogous to that of Corollary \ref{cor:fair}.
\end{proof}

\subsection{Online Cohort Selection}
We now consider the online scenario, in which our goal is to keep as few candidates pending as possible.
It is clear that the number of candidates on hold must be at least $k$ since we need to accept $k$ candidates at the end.
We provide an algorithm that achieves optimal \maxmin utility for the online fairness-preserving cohort selection problem, while leaving only $O(k)$ candidates pending.
We use a parameter $\alpha \in [0,1/2]$, which controls the number of candidates pending, with $\alpha=1/2$ by default. 
The update procedure depends on the sum of scores of the candidates seen so far.
We maintain a set of $k/\alpha$ candidates with highest scores called \textsc{top}, and two sets \textsc{rest} and \textsc{rand} of the remaining candidates. 
A new candidate $i$ is added to either \textsc{top} (and thus bumps some other candidate from \textsc{top} to \textsc{rest}), or they are added to \textsc{rest}. 
We make sure the list \textsc{rest} is not too big by rounding and eliminating candidates. 
Eliminated candidates are sampled uniformly to enter \textsc{rand}, which has size $k$.

During the online phase, it is tempting to use the idea from the offline case to eliminate candidates; take two candidates from \textsc{rest} with scores $a$ and $b$ and round them to $a+b$ and $0$. 
However, we must be careful when rounding \textsc{rest} to ensure that probabilities will not later exceed $1$ if the stream ends with sum $< k$.
The key idea (and motivation for having \textsc{top}) is that candidates not in \textsc{top} have increments at most $\alpha$, which we can see as follows.
For \textsc{rest} to be non-empty, we must have at least $n > k/\alpha$ candidates.
Let $x$ be the number of candidates who are in \textsc{top} when the algorithm finishes and achieve probability $1$.
For each of the other candidates, the increment $p_i - s_i$ is less than $(k - x) / (n - x) < k / n$, since $k < n$.
Thus that cumulative increment is at most $\alpha$, and it is safe to round probabilities in \textsc{rest} to $0$ and $1 - \alpha$.

If the stream ends with $\sum_i s_i< k$, we will increase the scores of everyone pending, and use \textsc{rand} to compensate for already eliminated candidates. Let $A$ be the set of candidates in \textsc{top} and \textsc{rest} and $\bar{A}$ be the set of all others seen.
As in the offline case, we will evenly increase the probabilities of all candidates by adding the appropriate $c$ to each candidate, and using water-filling where some candidates reach $1$. For candidates in $A$, this proceeds exactly as in the offline case.
For candidates in $\bar{A}$, we increase the probabilities without explicitly storing all of them by maintaining a set called \textsc{rand} consisting of $k$ uniformly random candidates.
Each member of \textsc{rand} receives probability equal to $1/k$ times the sum of the probabilities of the candidates in $\bar{A}$. 
Since everyone in $\bar{A}$ has the same probability, this rounding clearly preserves the marginal probabilities. 
Finally, we use the offline algorithm to select the cohort from the union of \textsc{rand} and $A$.

The case where the sum of probabilities exceeds $k$ is simpler. The algorithm always scales down all probabilities so that they add up to exactly $k$.
When a new candidate appears, they receive a probability equal to what $C$ gives them, times the current scaling factor, and get added to \textsc{top} $\cup$ \textsc{rest}. 
The algorithm then scales down all probabilities so that the sum is exactly $k$ and reduces the size of \textsc{top} $\cup$ \textsc{rest} by repeatedly rounding pairs of candidate as in the offline setting. 
In this case, the set \textsc{rand} is not used.

\begin{theorem}
For any classifier $\mathcal{C}$ that assigns scores $s_1, \ldots, s_n$ to $n$ candidates, Algorithm \ref{alg:fcs_str} solves the fairness-preserving cohort selection problem by selecting $k$ candidates with marginal probabilities $p_1, \ldots, p_n$ that achieve the optimal value of \maxmin utility.
The algorithm leaves no more than $O(k)$ candidates pending at any time.
\end{theorem}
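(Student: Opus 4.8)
The plan is to reduce everything to the offline case: I will show that the marginal selection probability of each candidate under Algorithm~\ref{alg:fcs_str} equals the marginal probability that the offline Algorithm~\ref{alg:fcs_offline} assigns when run on the full score vector $s_1,\dots,s_n$. Once this is done, fairness-preservation ($|p_i-p_j|\le|s_i-s_j|$) and optimality of the \maxmin utility $\min_i p_i/s_i$ follow verbatim from Theorem~\ref{thm:fcs_offline_optimal_utility}, and only the $O(k)$ bound on pending candidates must be argued separately. Throughout I use that the running sum $\sum_i s_i$ is nondecreasing along the stream, so ``sum $<k$'' holds on a prefix and ``sum $\ge k$'' holds from some point onward; I treat the two regimes separately and also check that nothing breaks at the moment the running sum crosses $k$.

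\textbf{The regime $\sum_i s_i\ge k$.} Here the algorithm only scales all current probabilities by $k/\mathrm{sum}$ and shrinks $\textsc{top}\cup\textsc{rest}$ by repeatedly applying the pairwise rounding of Lemma~\ref{lem:rounding} with $m=1$. Each rounding preserves every candidate's expected indicator, so by induction the marginal of candidate $i$ is $s_i\cdot k/\sum_l s_l$, exactly the offline value; fairness and optimality are then case~2 of the proof of Theorem~\ref{thm:fcs_offline_optimal_utility}. The only extra check is that a value $1-\alpha$ possibly created during an earlier low-sum phase stays $\le 1$ after scaling, which is trivial.

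\textbf{The regime $\sum_i s_i<k$ (the crux).} I first prove the structural fact that, while the running sum is below $k$, \textsc{rest} contains only candidates with score $<1-\alpha$: if $k/(1-\alpha)$ or more candidates had score $\ge 1-\alpha$ the sum would already be $\ge k$, and since $\alpha\le 1/2$ we have $k/(1-\alpha)\le k/\alpha=|\textsc{top}|$, so all such candidates sit in \textsc{top}. Together with the increment bound preceding the theorem (a candidate never placed in \textsc{top} receives cumulative increment $< k/n\le\alpha$), this shows it is safe to round values in \textsc{rest} into $[0,1-\alpha]$ via Lemma~\ref{lem:rounding} with $m=1-\alpha$: after the terminal additive water-filling such a candidate's probability is at most $(1-\alpha)+\alpha=1$. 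Next, Lemma~\ref{lem:rounding} gives that every in-stream rounding of \textsc{rest} preserves each candidate's expected probability; the uniform reservoir feeding \textsc{rand} keeps $k$ random eliminated candidates, which are exchangeable, and the final step assigns each member of \textsc{rand} exactly $1/k$ of the total mass destined for $\bar A$, so every eliminated candidate's marginal is preserved as well. Finally, running Algorithm~\ref{alg:fcs_offline} on $\textsc{rand}\cup A$ performs the same common-increment water-filling (Lemma~\ref{lem:equal_filling}) that the offline algorithm would perform on all $n$ scores --- the candidates capped at $1$ are precisely the highest-scoring members of \textsc{top} --- so the output marginals coincide, and fairness/optimality are case~1 of Theorem~\ref{thm:fcs_offline_optimal_utility}.

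\textbf{Pending candidates and the main obstacle.} At every step $|\textsc{top}|\le k/\alpha$, $|\textsc{rand}|\le k$, and \textsc{rest} is rounded down after each arrival to size $O(k)$, for a total of $O(k)$ pending candidates (about $3k$ plus lower-order terms with $\alpha=1/2$). The hard part is the $\mathrm{sum}<k$ analysis: one must verify carefully that composing the in-stream rounding of \textsc{rest}, the uniform resampling into \textsc{rand}, and the terminal offline water-filling reproduces exactly the offline marginals --- in particular that the deterministically chosen water level is consistent with the expectations preserved by Lemma~\ref{lem:rounding} --- and that the regime switch when $\sum_i s_i$ first reaches $k$ leaves all of these invariants intact.
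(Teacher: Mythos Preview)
Your proposal is correct and follows essentially the same approach as the paper: reduce to the offline algorithm by showing that Algorithm~\ref{alg:fcs_str} produces the same marginal probabilities as Algorithm~\ref{alg:fcs_offline} on the full score vector, then invoke Theorem~\ref{thm:fcs_offline_optimal_utility} for fairness and optimality, and argue the $O(k)$ pending bound separately. The case split on $\sum_i s_i \gtrless k$, the use of Lemma~\ref{lem:rounding} to track marginals through the in-stream roundings, the increment-at-most-$\alpha$ observation, and the size bookkeeping for \textsc{top}, \textsc{rest}, \textsc{rand} all mirror the paper.

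One presentational difference worth noting: for the $\mathrm{sum}<k$ regime the paper introduces a conceptual algorithm $\mathcal{A}'$ that retains a full list \textsc{zeros} of all eliminated candidates (instead of the size-$k$ reservoir \textsc{rand}), applies the offline water-filling verbatim to \textsc{top} $\cup$ \textsc{rest} $\cup$ \textsc{zeros}, and only afterward collapses the mass in \textsc{zeros} onto $k$ uniformly chosen representatives. This device makes it transparent that the water level and the set of clipped candidates are identical to those of Algorithm~\ref{alg:fcs_offline}, and that replacing \textsc{zeros} by the reservoir \textsc{rand} is a pure subsampling step that preserves marginals. Your direct argument (``running Algorithm~\ref{alg:fcs_offline} on $\textsc{rand}\cup A$ performs the same common-increment water-filling'') reaches the same conclusion, but the $\mathcal{A}'$ detour is what cleanly justifies the point you flag as the main obstacle---that the deterministically chosen water level is consistent with the expectations preserved by the in-stream rounding and the reservoir step.
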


\begin{proof}
We first show that, for a list of candidate scores $s_1, \ldots, s_n$, the selection probability given by the offline Algorithm \ref{alg:fcs_offline} for any candidate $i$ is the same as that given by Algorithm \ref{alg:fcs_str}, and therefore provides an optimal utility solution to the problem.
Let Algorithm \ref{alg:fcs_str} be denoted $\mathcal{A}$ and let $\mathcal{A}$ have probability $p_i$ of selecting candidate $i$. Let Algorithm \ref{alg:fcs_offline} be denoted $\mathcal{B}$ with probability $q_i$ selecting candidate $i$.
$\mathcal{A}$ never accepts candidates until the end of the stream, thus there are two major cases.
\begin{enumerate}
    \item The stream ends with $\textrm{sum} \geq k$, and $\mathcal{A}$ selects all candidates in \textsc{pending}. A candidate $i$ is added to \textsc{pending} in one of three ways:
    \begin{enumerate}
        \item 
        \label{case:fcs_online_proof_i_in_top}
        Candidate $i$ was in \textsc{top} when the sum reached $k$. 
        At each round after the sum exceeded $k$, they must survive \roundProb, which by Lemma \ref{lem:rounding} always preserves their marginal probability. 
        Therefore we only need to consider the effect of the incremental scaling adjustments.  
        First, let $\textrm{scale}_t$, $\textrm{sum}_t$, and $\textrm{incr}_t$, represent the values of these variables at some step $t$ when the sum first exceeds $k$.
        We initialized $\textrm{scale}_t \gets \frac{k}{\textrm{sum}_t}$.
        Consider the value of $\textrm{scale}_{t+1}$:
        \begin{align*}
            \textrm{scale}_{t+1}  = \textrm{incr}_t\cdot
            \textrm{scale}_{t}  = \frac{\textrm{sum}_{t+1} - s_{t+1}}{\textrm{sum}_{t+1}} \cdot \textrm{scale}_t =  \frac{\textrm{sum}_t}{\textrm{sum}_{t+1}} \cdot \frac{k}{\textrm{sum}_t} = \frac{k}{\textrm{sum}_{t+1}}
        \end{align*}
        Thus, we can see by induction that for all time steps $t$, $\textrm{scale}_t = \frac{k}{\textrm{sum}_t}$.
        Furthermore, we can express the final value of a single element $s_i$, which by definition is at position $i$ in the stream, as:
        \begin{align*}
            p_i & = s_i \cdot \textrm{scale}_i \cdot \prod_{t=i+1}^{n}{\textrm{incr}_t} = s_i \cdot \frac{k}{\textrm{sum}_i} \cdot \prod_{t=i+1}^{n}{\frac{\textrm{sum}_t - s_t}{\textrm{sum}_t}} \\
            & = s_i \cdot \frac{k}{\textrm{sum}_i} \cdot \prod_{t=i+1}^{n}{\frac{\textrm{sum}_{t-1}}{\textrm{sum}_t}}  = s_i \cdot \frac{k}{\textrm{sum}_{n}}
        \end{align*}
        
        Thus we see that the final $p_i$ is adjusted exactly the same way as it would be in the offline case.
        
        \item Candidate $i$ was already in \textsc{rest} or arrived at the moment when the sum reached $k$. For each iteration, they must survive \roundProb{}, which respects their marginal by Lemma \ref{lem:rounding}. From that point forward, they are part of \textsc{pending} and treated the same as in \ref{case:fcs_online_proof_i_in_top}.
                
        \item Candidate $i$ was encountered when $\textrm{sum} \geq k$. In this case, they are treated the same as in part \ref{case:fcs_online_proof_i_in_top} above.
    \end{enumerate}
    In either case, when $\mathcal{A}$ ends, a subset of candidates are selected from pending using \roundProb{}, which again preserves their marginal probability.
    Thus $p_i = q_i$ in these cases.

    \item The stream ends with $\textrm{sum} < k$. 
    $\mathcal{A}$ has three sets at this point: \textsc{top} (the greatest $\lceil k/\alpha \rceil$ elements of the stream), \textsc{rest} (at most $ \lceil k/\alpha \rceil$ elements rounded to $1-\alpha$) and \textsc{rand} (a randomly selected group of $k$ candidates, disjoint from \textsc{top} and \textsc{rest}). 
    
    No acceptances are made until the stream ends. The top candidates are placed into \textsc{top}, and elements are added to \textsc{rest} via \roundProb{}, which preserves original marginal probabilities exactly by Lemma \ref{lem:rounding}. The only change in probabilities then comes from the additive adjustments made when the stream ends. The main difference between the additive adjustments here and the water-filling behavior in $\mathcal{B}$ is how we treat the people in \textsc{rand}.
    
    For the purpose of proof, consider a conceptual algorithm $\mathcal{A}'$ that behaves exactly the same as $\mathcal{A}$, except instead of \textsc{rand} it has \textsc{zeros}, a list of all candidates outside of \textsc{top} and \textsc{rest}. 
    When this conceptual algorithm ends, all the mass from water-filling in \textsc{zeros} will be collected into a set of $k$ random candidates, which becomes the equivalent of the list \textsc{rand} in $\mathcal{A}$.
    
    In more detail, at the end of $\mathcal{A}'$, each member of \textsc{top}, \textsc{rest}, and \textsc{zeros} is given $\frac{k-\textrm{sum}}{n}$, and any overflow is redistributed using water-filling. 
    Any candidate $i$ now has the same probability of selection by both $\mathcal{A}'$ and $\mathcal{B}$: 
    $s_i + \alpha_i$. $\mathcal{A}'$ then performs a final \roundProb{} step to select the outputs. 
    Thus, in $\mathcal{A}'$, \textsc{top}, \textsc{rest}, and \textsc{zeros} are given the same treatment they would receive in $\mathcal{B}$.
    Before the final step of selecting candidates by \roundProb, $\mathcal{A}'$ collects all of the mass accumulated in \textsc{zeros} from water-filling into a random subset of $k$ members from \textsc{zeros}.
    Thus the only difference between $\mathcal{A}$ and $\mathcal{A}'$ is that, $\mathcal{A}'$ first kept everyone and later uniformly sampled a subset of $k$ people, whereas $\mathcal{A}$ immediately keeps a random subset of $k$.
   This subsampling step does not change the marginal probabilities of any element in \textsc{zeros}; their probability before and after collecting the mass of \textsc{zeros} 
    At the end, $\mathcal{A}$ selects candidates using \roundProb{} on the entire set \textsc{top} $\cup$ \textsc{rest} $\cup$ \textsc{rand}, which does not affect the adjusted marginals $s_i + \alpha_i$. Thus $p_i = s_i + \alpha_i = q_i \forall i$. 
    \end{enumerate}
    
    Next, we show that the algorithm keeps at most $O(k)$ candidates pending. 
    Any candidate not in one of: \textsc{top}, \textsc{rest}, \textsc{rand}, \textsc{pending} is considered rejected. 
    The sizes of \textsc{top} and \textsc{rand} are explicitly bounded by $\lceil k/\alpha \rceil$. 
    \textsc{rest} is not bounded in size explicitly, but note that the set is maintained by constantly applying \roundProb($\{p_i\}_{i \in\textrm{\textsc{rest}}}$, $1-\alpha$). 
    After rounding, at most $1$ person in \textsc{rest} has a value $< (1 - \alpha)$.
    If we ever have more than $\lceil k/(1-\alpha) \rceil$ elements, we have 
    $sum \geq \lceil k/(1-\alpha) \rceil (1-\alpha) > k$.
    This goes to the $sum \geq k$ case, which no longer adds elements to \textsc{rest}, thus it is bounded in length by $\lceil k/(1-\alpha) \rceil$. \textsc{pending} is created initially by adding \textsc{top} and \textsc{rest} ($\lceil k/(1-\alpha) \rceil  +  \lceil k/\alpha \rceil$ pending), but for subsequent steps is never longer than $k$ by Lemma \ref{lem:rounding}. Thus the worst we can do is remain under a sum of k for the entire stream. \textsc{top}, \textsc{rest}, and \textsc{rand} will all be kept pending until the end of the stream, but still we have at most $\lceil k/\alpha \rceil + \lceil k/(1-\alpha) \rceil + \lceil k/\alpha \rceil = O(k)$ candidates pending.
\end{proof}
\begin{cor}
If $C$ is individually fair, then Algorithm \ref{alg:fcs_str} is individually fair.
\end{cor}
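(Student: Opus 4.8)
The plan is to reduce this corollary entirely to facts already in hand, exactly mirroring the structure of the proof of Corollary~\ref{cor:fair}. The crucial input is the theorem immediately preceding this corollary, which shows that Algorithm~\ref{alg:fcs_str}, run on a stream of scores $s_1,\dots,s_n$, produces precisely the same marginal selection probabilities as the offline Algorithm~\ref{alg:fcs_offline} run on the same scores. So the first step is simply to invoke that equivalence: writing $p_i$ for the online marginal and $q_i$ for the offline marginal of candidate $i$, we have $p_i = q_i$ for every $i$, in both the $\textrm{sum} \ge k$ and $\textrm{sum} < k$ cases handled there.

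The second step is to invoke the fairness-preservation half of Theorem~\ref{thm:fcs_offline_optimal_utility} (equivalently, the content of Corollary~\ref{cor:ratio_fair}), which establishes that the offline marginals satisfy $|q_i - q_j| \le |s_i - s_j|$ for every pair $i,j$. Combined with the equivalence from the first step, this immediately yields $|p_i - p_j| \le |s_i - s_j|$ for the online algorithm.

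The final step uses the hypothesis that $C$ is individually fair with respect to some metric $\cD$, so that $|s_i - s_j| \le \cD(i,j)$ for all $i,j$, and chains the inequalities:
\[
|p_i - p_j| \le |s_i - s_j| \le \cD(i,j).
\]
By Definition~\ref{def:indiv-fair-classify} this is exactly the assertion that the cohort selection performed by Algorithm~\ref{alg:fcs_str} is individually fair, which completes the proof.

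I do not expect any real obstacle here, since the substantive work — the coupling between the online and offline executions and the case analysis proving $p_i = q_i$ — has already been carried out in the preceding theorem. The only point worth a moment's care is that that equivalence is stated at the level of individual marginal probabilities rather than pairwise differences; but since individual fairness is a property purely of the marginals, this is all that is needed, and no additional coupling or perturbation argument (as was required for the $\varepsilon$-approximate linear case) is necessary.
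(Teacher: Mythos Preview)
Your proposal is correct and follows essentially the same approach as the paper, which simply states that the proof is analogous to that of Corollary~\ref{cor:ratio_fair} (itself analogous to Corollary~\ref{cor:fair}). Your version is just a more explicit unwinding of that analogy: invoke the preceding theorem to identify the online marginals with the offline ones, then chain the fairness-preservation inequality $|p_i-p_j|\le|s_i-s_j|$ with the hypothesis $|s_i-s_j|\le\cD(i,j)$.
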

\begin{proof}
The proof is analogous to that of Corollary \ref{cor:ratio_fair}.
\end{proof}
\textcolor{white}{ }
\begin{algorithm}[H]
\DontPrintSemicolon
\SetAlgorithmName{Alg.}{}

\KwIn{
stream of scores $s_1, \ldots, s_n \in [0,1]$ from $C$, cohort size $k$, constant $\alpha \in [0, 1/2]$
}
\KwOut{
set of accepted candidates
}
\For{person $i$ in stream}{
    $\textrm{sum} \gets \textrm{sum} + s_i$; \ $p_i \gets s_i$\;
    \uIf{$\textrm{sum} < k$}{
        \lIf{$| \textsc{top} | < \lceil k/\alpha \rceil$}{
            add $i$ to \textsc{top}
        }
        \uElseIf{$\textrm{min}(s_j)$ in \textsc{top} $< s_i$}{
            move min element from \textsc{top} to \textsc{rest}\; 
            add $i$ to \textsc{top}\;
        }
        \lElse{add $i$ to \textsc{rest}}
        \roundProb{}($\{ p_i : i \in $ \textsc{rest} $\}, 1-\alpha$)\;
        \For{$j$ in \textsc{rest} with $p_j=0$}{
            remove $j$ from \textsc{rest}\;
            add $j$ to unif. reservoir sample of size $k$\;
        }
        set \textsc{rand} to unif. reservoir sample\;
    }
    \Else{
        \uIf{first occurrence of $\textrm{sum} \ge k$}{
        remove \textsc{rand}\; 
        store \textsc{top} and \textsc{rest} in \textsc{pending}\;
        $\textrm{incr} \gets \frac{k}{\textrm{sum}}$ ; \         $\textrm{scale} \gets \frac{k}{\textrm{sum}}$\;
        }
        \Else{$\textrm{incr} \gets \frac{sum - s_i}{sum}$\;
        $\textrm{scale} \gets \textrm{scale} \cdot \textrm{incr}$\;}
        $p_i \gets p_i \cdot \textrm{scale}$ \;
        $p_j \gets p_j \cdot \textrm{incr}$, $\forall j \in$ \textsc{pending}\;
        add $i$ to \textsc{pending}\;
        \roundProb{}($\{p_i:i \in$  \textrm{\textsc{pending}}\}, 1)\;
        remove $j$ from \textsc{pending} if $p_j = 0$\;
    }
}
\If{$\textrm{sum}<k$}{
    $c \gets \frac{k-\textrm{sum}}{n}$ \;
    $p_i \gets p_i+c$, $\forall i \in \textsc{top}\cup\textsc{rest}$\;
    $p_i \gets \frac{k-\sum_{i \in  \textsc{top} \cup \textsc{rest}}p_i}{|\textsc{rand}|}$, $\forall i \in \textsc{rand}$\;
    \While{$\exists p_i > 1$}{
        $\mathcal{S}_{(\ge 1)}\gets \{i: p_i \ge 1\}$\;
        $c \gets \frac{p_i-1}{n - |\mathcal{S}_{(\ge 1)}|}$\;
        $p_j \gets p_j+c$, $\forall j \in
        \textsc{top}\cup\textsc{rest}$ with $p_j < 1$\;
        set $n_{\text{modified}}$ to number of modified people \;
        $p_j \gets p_j + \frac{c \cdot (n- |\mathcal{S}_{(\ge 1)}| - n_{\textrm{modified}})}{|\textsc{rand}|}$,  $\forall j \in \textsc{rand}$\;
       
        $p_i \gets 1$\;
    }
    \roundProb{}($\{p_i: i \in \textsc{top}\cup\textsc{rest}\cup\textsc{rand}\}$, 1)\;
    set \textsc{pending} to indices of nonzero $p_i$\;
}
\Return \textsc{pending}\;
\caption{Online Cohort Selection for Ratio Utility}
\label{alg:fcs_str}
\end{algorithm}

\bibliographystyle{alpha}
\bibliography{references.bib}
\pagebreak
\appendix
\section{Dependent Rounding}
\subsection{Algorithm}
Here we include a detailed description of the dependent rounding algorithm mentioned in Section \ref{sec:dr}.
\begin{algorithm}[!h]
\SetAlgoLined
\DontPrintSemicolon
\KwIn{a list of $n$ numbers $a_1, a_2, \ldots, a_n \in \mathbb{R}_{\geq 0}$ and the maximum value of a rounded number
$m \in \mathbb{R}_{\geq 0}$}
\KwOut{list of rounded numbers $\tilde{a}_1, \tilde{a}_2, \ldots, \tilde{a}_n \in \mathbb{R}_{\geq 0}$ \newline }
$\textrm{pendingIndex} \leftarrow 1$\;
\For{$i$ from $2$ to $n$}
{
    \If{$a_i= 0$ and $a_\textrm{\upshape pendingIndex}=0$}{
        continue to next $i$\;
    }
    $a \leftarrow a_i$\;
    $b \leftarrow a_{\textrm{pendingIndex}}$\;
    choose $u$ randomly from $\mathit{Unif}(0,1)$\;
    \eIf{$a+b\leq m$}
    {
        \eIf{$u <\frac{a}{a+b}$}{
            $a_i \leftarrow a+b$\;
            $a_{\textrm{pendingIndex}} \leftarrow 0$\;
            $\textrm{pendingIndex} \leftarrow i$\;
        }{
            $a_i \leftarrow 0$\;
            $a_{\textrm{pendingIndex}} \leftarrow a+b$\;
        }
        
    }
    {
        \eIf{$u<\frac{m-b}{2m-a-b}$}{
            $a_i \leftarrow m$\;
            $a_{\textrm{pendingIndex}} \leftarrow a+b-m$\;
        }{
            $a_i \leftarrow a+b-m$\;
            $a_{\textrm{pendingIndex}} \leftarrow m$\;
            $\textrm{pendingIndex} \leftarrow i$\;
        }
    }
}
\Return $a_1, a_2, \ldots, a_n$
 \caption{Rounding}
 \label{alg:rounding}
\end{algorithm}
\subsection{Proof of Lemma \ref{lem:rounding}}

\begin{proof}
We begin by showing that for all $i \in [n]$, $\mathbb{E}[\tilde{a}_i]=a_i$. At step $i$, we have two numbers $a$ and $b$ which get rounded and obtain new values denoted by $\tilde{a}$ and $\tilde{b}$, respectively. The rounding depends on the value of $a+b$. If $a+b$ is less than or equal to $\beta$, then $\mathbb{E}[\tilde{a}] = (a+b)\frac{a}{a+b} = a$. If $a+b$ is greater than $\beta$ and at most $2\beta$, then $\mathbb{E}[\tilde{a}] = v\frac{\beta-b}{2\beta-a-b} + (a+b-\beta)\frac{\beta-a}{2\beta-a-b} = a$. Similarly, we obtain $\mathbb{E}[\tilde{b}]=b$. Since the expected values remain constant throughout the process, we conclude that for all $i \in [n]$, $\mathbb{E}[\tilde{a}_i]=a_i$.

We notice that for any step $i$ of the algorithm and for all $j$ that are smaller than $i$ but are not the pendingIndex, $a_j=0$ or $a_j = \beta$. As a result, at the end of the algorithm all elements with index $j$ such that $j < n$ and $j \neq \textrm{pendingIndex}$ and one of the $n$-th or the pendingIndex elements are rounded to either $0$ or $\beta$. The remaining element is the only one that can have a value in all $[0,\beta]$. Since $\sum_{i=1}^{n} a_i = \textrm{sum}$, $\lfloor \frac{\textrm{sum}}{\beta}\rfloor$ elements are $\beta$ and if the remainder $\textrm{sum}-\lfloor \frac{\textrm{sum}}{\beta}\rfloor \beta$ is non-zero, it is assigned to the remaining element.
\end{proof}
\end{document}